\journalname{Engineering Applications of Artificial Intelligence}
\newtheorem{thm}{Theorem}
\newtheorem{prop}{Proposition}
\begin{document}

\begin{frontmatter}

\title{Fictitious play for cooperative action selection in robot teams}

\author[a1]{M. Smyrnakis\corref{cor1}}
\ead{m.smyrnakis@sheffield.ac.uk}
\author[a1]{S. Veres}
\ead{s.veres@sheffield.ac.uk}
\address[a1]{Department of Automatic Control and Systems Engineering
The University of Sheffield
Mappin Street
Sheffield, S1 3JD
United Kingdom }

\begin{abstract}
A game theoretic distributed decision making approach is presented for the problem of control effort allocation in a robotic team  based on a novel variant of fictitious play. The proposed learning process allows the robots to accomplish their objectives by coordinating their actions in order to efficiently complete their tasks. In particular, each robot of the team  predicts the other robots' planned actions while making decisions to maximise their own expected reward that depends on the reward for joint  successful completion of the task.  Action selection is interpreted as an $n$-player cooperative game. The approach presented  can be seen as part of the \emph{Belief Desire Intention} (BDI) framework, also can address  the problem of  cooperative, legal, safe, considerate and emphatic decisions by robots if their individual and group rewards are suitably defined. After theoretical analysis the performance of the proposed algorithm is tested on four simulation scenarios. The first one is a coordination game between two material handling robots,  the second one is a warehouse patrolling task by a team of robots, the third one presents a coordination mechanism between two robots that carry a heavy object on a corridor and the fourth one is an example of coordination on a sensors network. 
\end{abstract}

\begin{keyword}
Robot team coordination; Fictitious play; Extended Kalman filters; Game theory; Distributed optimisation


\end{keyword}

\end{frontmatter}

\section{Introduction}
\label{intro}
Recent advances in industrial automation technology often require distributed  optimisation in a multi-agent system where each agent controls a machine. 
An application of particular interest,  addressed in this paper through a game-theoretic approach, is the coordination of robot teams. Teams of robots can be used in many domains such as mine detection \citep{minedetect}, medication delivery in medical facilities \citep{med_fac}, formation control \cite{flight} and exploration of unknown environments \cite{exploration,env1,env2}. In these cases teams of intelligent robots should coordinate in order to accomplish a desired task. When autonomy is a desired property of a multi-robot system then self-coordination is necessary between the robots of the team. Applications of these methodologies also include wireless sensor networks~\cite{wide_area,sn,animal_tracking,monitoring}, smart grids~\cite{smart_grid1,smart_grid2}, water distribution system optimisation \cite{water} and scheduling problems \cite{sc}. 

Game theory has also been used to design optimal controllers when the objective is coordination, see e.g. \cite{game_robot1}.  Using this approach   the agents/robots  eventually reach the Nash equilibrium of a coordination game. In \cite{game_robot1,game_robot2} local and global components in the agents' cost function are used and the Nash equilibrium of the game is reached. Another approach is presented in \cite{game_robot3}, based on agents' cost functions, which use local components and the assumption that the states of the other agents are constant.

Fictitious play is an iterative learning process where players choose an action that maximises their expected rewards based on their beliefs about their opponents' strategies. The players update these beliefs after observing their opponents' actions. Even though fictitious play converges to the Nash equilibrium for certain categories of games \citep{fp1,fp2,fp3,fp4,learning_in_games}, this convergence can be very slow because of the assumption that players use a fixed strategy in the whole game \cite{learning_in_games}. Speed up of the convergence can be facilitated by an alternative approach, which was presented in \cite{cj}, where opponents' strategies vary through time and players  use particle filters to predict them. Though providing faster convergence, this approach has the drawback of high computational costs of the particle filters. In applications where the computational cost is important, as the coordination of many UAVs, the particle filters approach is intractable. An alternative, we propose here, is to use  extended Kalman filters (EKF) instead of  predicting opponents' strategies using particle filters. EKFs have much smaller computational costs than the particle filter variant of fictitious play algorithm that has been proposed in \cite{cj}. Moreover in contrast to \citep{cj} we provide a proof of convergence to Nash equilibrium of the proposed learning algorithm for potential games. Potential games are of particular interest as many distributed optimisation tasks can be cast as potential games. Therefore convergence of an algorithm to the the Nash equilibrium of a potential game is equivalent to convergence to the global or local optimum of the distributed optimisation problem.

Thus the proposed learning process can be used as a design methodology for cooperative control based on game theory, which overlaps with solutions in the area of distributed optimisation \citep{dist1}. Each agent $i$ strives to maximise a global control reward, as negative of the control cost, through minimising its private control cost, which is associated with the global one. The private cost function of an agent $i$ incorporates terms that not only depend on agent $i$, but also on costs associated with the actions of other agents. As the agents strive to minimise a common  cost function through their individual ones, the problem addressed here can also be seen as a distributed optimisation problem. In this work we enable the agents to learn how they minimise their cost function through communication and interaction with other agents, instead of finding the Nash equilibrium of the game, which is not possible in polynomial time for some games \citep{daskalakis}. In the proposed scheme robots learn and change their behaviour according to the other robots' actions. The learning algorithm, which is based on fictitious play \citep{brown_fict}, serves as the coordination mechanism of the controllers of team members. Thus, in the proposed cooperative control methodology there is an implicit coordination phase where agents learn other agents' policies and then they use this knowledge to decide on the action that minimises their cost functions. Additionally the proposed control module can be seen as a part of the BDI framework since each agent updates his beliefs about his opponents' strategies given the state of the environment and based on a decision rule that can represent his desires perform the selected actions



The remainder of this paper is organised as follows. We start with a brief description of relevant game-theoretic definitions. Section \ref{sect:rac} present some background material about Rational agents.  
Section \ref{new_algorithm} introduces the learning algorithm that we use in our controller, Section \ref{theory} contains the main theoretical results and Section \ref{parameters} contains the simulation results in order to define the parameters of the proposed algorithm.   Section \ref{simulation}  presents   simulation results  before  conclusions are drawn. 

\section{Game theoretical definitions}
\label{sec:games}
In this section we will briefly present some basic definitions from game theory, since the learning block of our controller is based on these. A game $\Gamma$ is defined by a set of players $\mathcal{I}$, $i \in \{1,2,\ldots,\mathcal{I}\}$, who can choose an action, $s^{i}$, from a finite discrete set $S^{i}$. We then can define the joint action $s$, $s=(s^{1}, \ldots, s^{\mathcal{I}})$, that is played in a game as an element of the  product set \mbox{$S=\times_{i=1}^{i=\mathcal{I}}S^{i}$}. Each player $i$ receives a reward, $r^{i}$, after choosing an action $s^{i}$. The reward, also called the utility, is a map from the joint action space to the real numbers, $r^{i}:S \rightarrow R$. We will often write $s=(s^{i},s^{-i})$, where $s^{i}$ is the action of player $i$ and $s^{-i}$ is the joint action of player $i$'s opponents. When players select their actions using a probability distribution they use mixed strategies. The mixed strategy of a player $i$, $\sigma^{i}$, is an element of the set $\Delta^{i}$, where $\Delta^{i}$ is the set of all the probability distributions over the action space $S^{i}$. The joint mixed strategy, $\sigma$, is then an element of $\Delta=\times_{i=1}^{i=\mathbb{I}}\Delta^{i}$. A strategy where a specific action is chosen with probability 1 is referred to as pure strategy. Analogously to the joint actions we will write  $\sigma=(\sigma^{i},\sigma^{-i})$ for mixed strategies. The expected utility a player $i$ will gain if it chooses a strategy $\sigma^{i}$ (resp.\ $s^{i}$), when its opponents choose the joint strategy $\sigma^{-i}$,  is denoted by $r^{i}(\sigma^{i},\sigma^{-i})$ (resp.\ $r^{i}(s^{i},\sigma^{-i})$).

A game, depending on the structure of its reward functions, can be characterised either as competitive or as a coordination game. In competitive games players have conflicted interest and there is not a single joint action where all players maximise their utilities. Zero sum games are a representative example of competitive games where the reward of a player $i$   is the loss of other players . An example of a zero-sum game is presented in Table \ref{tab:matching}. 
\begin{table}%
\centering
\begin{tabular}{|l|c|r|}
\hline
&Head &Tails \\ \hline
Head & 1,-1&-1,1 \\ \hline
Tails & -1,1&1,-1\\ \hline
\end{tabular}
\caption{Rewards of two players in a zero sum game as function of the outcome of throwing a coin: matching pennies game.  }
\label{tab:matching}
\end{table}
On the other hand in coordination games players either share a common reward function or their rewards are maximised in the same joint action. A very simple example of a coordination game where players share the same rewards, is depicted in Table \ref{tab:simcoord}.
\begin{table}
\centering
\begin{tabular}{|c| c| c|}
\hline
 &L&R\\ \hline
 U& 1,1 & 0,0 \\ \hline
 D& 0,0 & 1,1 \\ \hline
 \end{tabular}
\caption{Rewards of two players  in a simple coordination game as function of joint moves to the left \& up (L,U) or right \& down (R).}
 \label{tab:simcoord}
\end{table}
Even though competitive games are the most studied games, we will focus our work on coordination games because they naturally formulate a solution to distributed optimisation and coordination. 

\subsection{Best response and Nash Equilibrium}
A common decision rule in game theory is best response. Best response is defined as the action that maximises players' expected utility given their opponents' strategies. Thus for a specific mixed strategy  $\sigma^{-i}$ we evaluate the best response as:
\begin{equation}
\hat{\sigma}^{i}_{pure}(\sigma^{-i})= \mathop{\rm argmax}_{s^{i} \in S} \quad
r^{i}(s^{i},\sigma^{-i})
\label{eq:br}
\end{equation}
Nash in \cite{nash} showed that every game has at least one equilibrium. A joint mixed strategy $\hat{\sigma}=(\hat{\sigma}^{i},\hat{\sigma}^{-i})$ is called a Nash equilibrium when
\begin{equation}
r^{i}(\hat{\sigma}^{i},\hat{\sigma}^{-i})\geq r^{i}(\sigma^{i},\hat{\sigma}^{-i}) \qquad \textrm{for any }  \sigma^{i} \in \Delta^{i}
\label{eq:nashutil}
\end{equation}
Equation (\ref{eq:nashutil}) implies that if a strategy $\hat{\sigma}$ is a Nash equilibrium then it is not possible for a player to increase its utility by unilaterally changing its strategy. When all the robotic players in a game select their actions using pure strategies then the equilibrium is referred as pure Nash equilibrium.

\subsection{Optimisation tasks as potential games}
It is possible to cast distributed optimisation tasks as potential games \citep{autonomous,archie}, thus the task of finding an optimal solution of the distributed optimisation task can be seen as the search of a Nash equilibrium in a game. An optimisation problem can be solved distributively if it can be divided into $\mathcal{D}$ coupled or independent sub-problems with the following property \cite{berts}:
\begin{equation}
r (s)-r (\tilde{s})>0  \Leftrightarrow
r^{i}(s^i)-r^{i}(\tilde{s}^{i})>0,\ i \in \mathcal{I}, \forall s, \tilde{s}
\label{eq:masprop}
\end{equation}
where $s$ and $\tilde{s}$ are any sets of actions by the agents, $r$ represents the global reward function or global utility, and $r^{i},\  i \in \mathcal{I}$, represent    players' reward or utility. Equation (\ref{eq:masprop}) implies that  a joint action $s$, should have the similarly positive or negative impact   in the global and the local task. Thus a solution $s$ should increase or decrease both the local and global utility when it is compared with another joint action $\tilde{s}$.

There is a direct analogy between (\ref{eq:masprop}) and ordinal potential games. Ordinal potential games are games where their reward function have a potential function with the following property \citep{fp4}
\begin{equation}
r^{i}(s^{i},s^{-i})-r^{i}(\tilde{s}^{i},\tilde{s}^{-i})>0 \Leftrightarrow
\phi(s^{i},s^{-i})-\phi(\tilde{s}^{i},\tilde{s}^{-i})>0, \forall s=(s^{i},s^{-i}),\ \forall \tilde{s}=(\tilde{s}^{i},\tilde{s}^{-i}) \label{eq:poten}
\end{equation}
where $\phi$ is a potential function and the above equality stands for every player $i$. Exact potential games (or potential games thereafter), is a subclass of ordinal potential games which can be used to solve distributed optimisation problems where the difference in the global reward between two  joint actions is the same as the difference in the potential function \citep{fp4}:
\begin{equation}
r^{i}(s^{i},s^{-i})-r^{i}(\tilde{s}^{i},\tilde{s}^{-i})=
\phi(s^{i},s^{-i})-\phi(\tilde{s}^{i},\tilde{s}^{-i}),\ \forall s=(s^{i},s^{-i}),\ \forall \tilde{s}=(\tilde{s}^{i},\tilde{s}^{-i})
\label{eq:pot}
\end{equation}
where similarly to ordinal potential games $\phi$ is a potential function and the above equality stands for every player $i$.
An advantage of potential games is that they have at least one pure Nash equilibrium, hence there is at least one joint action $s$ where no player can increase their reward, i.e. their potential function, through a unilateral change of action. 

It is often feasible to choose appropriate forms of  agents' utility functions and also the global utility in order to enable the existence and use of  a potential function of the system. In this paper we assume that all players/robots share the same global reward. There are cases where, because of communication or other physical constraints, only reward functions that are shared among groups of robots can be used. But even in these cases   it is possible to create reward functions that can act as potentials. Wonderful life utility is such a utility function. It was introduced in \cite{wlu} and applied in \cite{autonomous} to formulate distributed optimisation tasks  as potential games. Player $i$'s utility, when wonderful life utility is used, can be defined as the difference between the global utility $r_{g}$ and the utility of the system when a reference action is used as Player $i$'s action. More formally when Player $i$ chooses an action $s^{i}$ one can define
\begin{equation}
r^{i}(s^{i})=r_{g}(s^{i},s^{-i})-r_{g}(s^{i}_{0},s^{-i}), \forall s^{i},\ \forall s^{-i}
\label{eq:wlu}
\end{equation}
where $s^{i}_{0}$ denotes a reference action for player $i$. A reference action is introduced because in strategic games there is no  action to represents the case when the player chooses to take no action.

\section{Rational agent cooperation }
\label{sect:rac}
\subsection{Agent Definitions}
\label{sec:background}


By analogy to previous definitions \cite{lincoln2013,wooldridge2009,rao_agents,rao_agents1} of AgentSpeak-like architectures, we define our agents as a tuple:
\begin{equation}
\label{eq:agent}
\mathcal{R}=\{ \mathcal{F},B,L,\Pi,A\}
\end{equation}
where:
\begin{itemize}
\item 
	$\mathcal{F} = \{p_1,p_2,\ldots,p_{n_p}\}$ is the set of all predicates.
\item
	$B \subset \mathcal{F}$ is the total set of belief predicates. The current belief base at time $t$ is defined as $B_t \subset B$. Beliefs that are added, deleted or modified can be either called \emph{internal} or \emph{external} depending on whether they are generated from an internal action, in which case are referred to as ``mental notes'', or  from an external input, in which case they are called ``percepts''. 
\item
	$L = \{l_1,l_2,\ldots\,l_{n_l}\}$ is a set of logic-based implication rules.
\item
	$\Pi = \{\pi_1,\pi_2,\ldots,\pi_{n_\pi}\}$ is the set of executable plans or \emph{plans library}. 

Current applicable plans at time $t$ are part of the subset applicable plan $\Pi_t \subset \Pi$ or "desire set".  
	
	 \item 
	
	$A = \{a_1,a_2,\ldots,a_{n_a}\} \subset \mathcal{F} \setminus B$ is a set of all available actions. Actions can be either \emph{internal}, when they modify the belief base or fata in memory objects, or \emph{external}, when they are linked to external functions that operate in the environment.
\end{itemize}


 AgentSpeak like languages, including the limited instruction set agent \cite{ecc16lisa},  can be fully defined and implemented by listing the following items:
\begin{itemize}
\item \emph{Initial Beliefs}.\\
	The initial beliefs and goals $B_0 \subset F$ are a set of literals that are automatically copied into the \emph{belief base} $B_t$ (that is the set of current beliefs) when the agent mind is first run.
\item \emph{Initial Actions}.\\
	The initial actions $A_0 \subset A$ are a set of actions that are executed when the agent mind is first run. The actions are generally goals that activate specific plans.

	\end{itemize}

The following three operations are repeated for each reasoning cycle. 
\begin{itemize}	
	\item \emph{Maintenance of Percepts}.  This means generation of perception predicates for $B_t$ and data objects such as the world model used here $W$.  
	\item \emph{Logic rules}.\\
	A set of logic based implication rules $L$ describes \emph{theoretical} reasoning to improve the agent current knowledge about the world.
\item \emph{Executable plans}.\\
	A set of \emph{executable plans} or \emph{plan library} $\Pi$. Each plan $\pi_j$ is described in the form:
	\begin{equation}
		p_j : c_j \leftarrow a_1, a_2, \ldots, a_{n_j}
	\end{equation}
where $p_j \in B$ is a \emph{triggering predicate}, which allows the plan to be retrieved from the plan library whenever it comes true, $c_j \in B$ is a logic formula of a \emph{context}, which allows the agent to check the state of the world, described by the current belief set $B_t$, before applying a particular plan sequence $a_1, a_2, \ldots, a_{n_j} \in A$ with a list of actions.  Each $a_j$ can be one of (1) predicate of an external action with arguments of names of data objects, (2) internal (mental note) with a preceding + or - sign to indicate whether the predicate needs to be added or taken away from the belief set $B_t$ (3) conditional set of items from (1)-(2).  The set of all triggers $p_j$ in a program is denoted by $E_{tr}$
\end{itemize}

The reasoning cycle of LISA used in this paper consists of the following steps (Fig. \ref{fig:lisacycle}):
\begin{enumerate}
\item
	\emph{Belief base update}.\\ The agent updates the belief base by retrieving information about the world through perception and communication. The job is done by two functions, called \emph{Belief Update Function (BUF)} and \emph{Belief Review Function (BRF)}. The BUF takes care of adding and removing beliefs from the belief base; the BRF updates the set of current events $E_t$ by looking at the changes in the belief base.
\item \emph{Application of logic rules}.\\	
	The logic rules in $L$ are applied in a round-robin fashion (restarting at the beginning of the list) until there are new predicates generated for $B_t$. This means that rules need to be verified not to lead to infinite loops.  
	\item
	\emph{Trigger Event Selection}.\\ For every reasoning cycle  a function called \emph{Intention Set Function} selects the current event set $E_t$:
	\begin{equation}
		S_t : \wp(B_t) \rightarrow \wp(E_t) \
	\end{equation}
	where $\wp(\cdot)$ is the so called \emph{power operator} and represents the set of all possible subset of a particular set. 
	We will call the current selected trigger event $S_t(B)=T_t$ and the associated plans the \emph{Intention Set}. 	
\item
	\emph{Plan Selection}.\\ All the trigger plans in $T_t$ are checked for their context  to form the  \emph{Applicable Plans} set $\Pi_t$,  	\begin{equation}
		S_O: E_t \rightarrow \wp(S_t)
	\end{equation}
	We will call the current selected plan $S_O(\Pi_t)=\pi_t$.	
\item
	\emph{Plan Executions}.\\ 
	 All plans in $S_O: E_t $ are started to be executed concurrently by going through the plan items 
	 $ a_1, a_2, \ldots, a_{n_j} $ one-by-one sequentially.   	 
\end{enumerate}

\begin{figure}[htbp]
	\centering
\includegraphics[width=80mm]{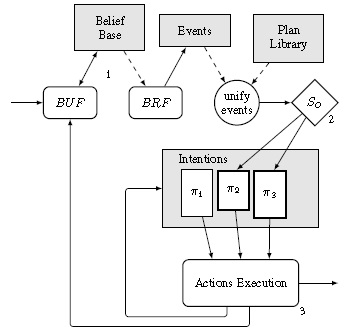}
\caption{Reasoning cycle of LISA: the plans run in a multi-threaded way, avoiding the need of $S_E$ and $S_I$}
	\label{fig:lisacycle}
\end{figure}
 
\subsection{Cooperation cycle}
\label{sect:coop_cycle}
The robots review their actions at a much slower rate than their reasoning cycle operates. During a cooperation cycle each of the agents observes the other's actions, learns from it, defines a reward function relevant to the environmental circumstances, performs optimisation to select its own  action takes or continues the  action undergoing.  In most practical situations the currently executed actions is one of the options for selection.   The agent theory described in the previous subsection can be used to make the process well coordinated among the agents using logic rules, perception cycles and  executable plans allocated as follows for each of the cyclical steps the agents need to perform collectively.  

Slower than the reasoning cycle, is the asynchronous \emph{operational cycle}.  Its length in time can vary depending on the agent and circumstances, it is asynchronous across the agent set as availability of communication is not guaranteed between all the agents at all times. Each agent executes the following steps during its operational cycle, which is an integer multiple $N_o\ge 3$ of its reasoning cycle.  Note that 
update of a simultaneous localisation and mapping-based world model $W$.  The following steps are carried out during the last of the reasoning cycles  in the operational cycle. This will allow enoug useful data to be
 
\begin{enumerate}
 \item  Agent $i$  performs estimation of other agent's actions set $s^{-i}$ by analysing changes in $W$ in time. The plan associated with this is  $$ estim\_agents : cooperate \leftarrow estim\_movements(W,Am), infer\_activities(Am,Sni), Opt(R,Sni,Act), trigger(Act). $$
  where $Am$ is available movement description data for the rest of the agents in the team and $Sni$ is a notation for $s^{-i}$.     Each item in $s\in s^{-i}$ can be a pair describing the situation and the action the agent is  taking.  The reward function $r^i(s^i,s^{-i})$ is represented by a data object $R$.  
  
\item  $trigger(Act)$ adds a predicate $pAct$ to $B_{t+1}$ to trigger the action plan associated with $Act$:
$$ pAct  :  \tilde  \ active(pAct)\  \&  \  \ \tilde  \    \ goal(reached)\  \leftarrow \ perform(pAct)  .   $$
Here the predicate $goal(reached)$ can  be established by logic inference in $L$ during reasoning cycles involving predicates generated by perception.

\item  Finally there is a mission completion plan to be automatically executed when the mission is completed. 
$$ goal(reached) : mission\_started  \leftarrow  return\_to\_base(B), -misson\_started.  $$
where $return\_to\_base(B)$ is a repeated action of returning with collision avoidance to a base described by data object $B$.
 
 \end{enumerate}   
 
  The above schema is generally applicable, can be particularised for its actions and  can be added to make and agent cooperative  by fictitious play.

\subsection{Example}

Consider the following case of a UAV team as an example. Assume that one of the UAVs in the team suddenly find itself with limited battery power and it can choose between two actions. One is that it quickly lands as this has small control cost and can therefore can  be performed safely. The second action is to pick up parcel  first, which has high control cost and  can result in unsafe landing later. In addition consider the case where the action of landing will affect negatively the performance of other UAVs and the team's mission will fail while picking up the parcel  would result in a successful performance of the desired joint task. The costs and rewards of this decision dilemma can be formulated in terms of the above mentioned energy and environmental costs.

In order to ensure that the robot team will accomplish its mission the impact of the control cost in robots' decisions should be smaller than the one of  the environmental cost. This can be achieved either by setting $w_{m}<<w_{e}$ or by setting the cost of a mission failure to be significantly higher than the maximum control cost. This example is to point out the variety of possible decisions that can be accommodated when reward functions are used to solve the distributed optimisation tasks. 

The following case study, which has also studied in simulation, shows how the proposed game theoretic decision making mechanism. We consider a team of robots who are to coordinate their team in order to identify possible threats in a warehouse, where we assume that there are $\mathcal{N}$ rooms with some hazardous items. The materials have different attributes of the following categories: flammable, chemical, radioactive and security sensitive. In each room  there can be items that belong up to two different categories but there are constraints.  For instance, are not allowed in the same room  flammable, radioactive and chemical materials together . Each of the $\mathcal{I}$ robots of the team is equipped with sensors which can sense the different attributes and capabilities. Figure \ref{fig:example} depicts an example of this scenario. The set of rooms correspond to the set of possible placement actions of each robot and is denoted by $C$. In each of these delivery states there are control actions of the robots  described by dynamical models. As an example we can consider the task of moving towards the region of interest. Each robot $i$ should choose one of the available rooms $n$, based on the cost of the total control costs of moving to room $n$, but also the estimated decisions of the other robots and the suitability of robot $i$ to examine room $n$ for its hazardous materials. 

\begin{figure}
\centering
\includegraphics[scale=0.25]{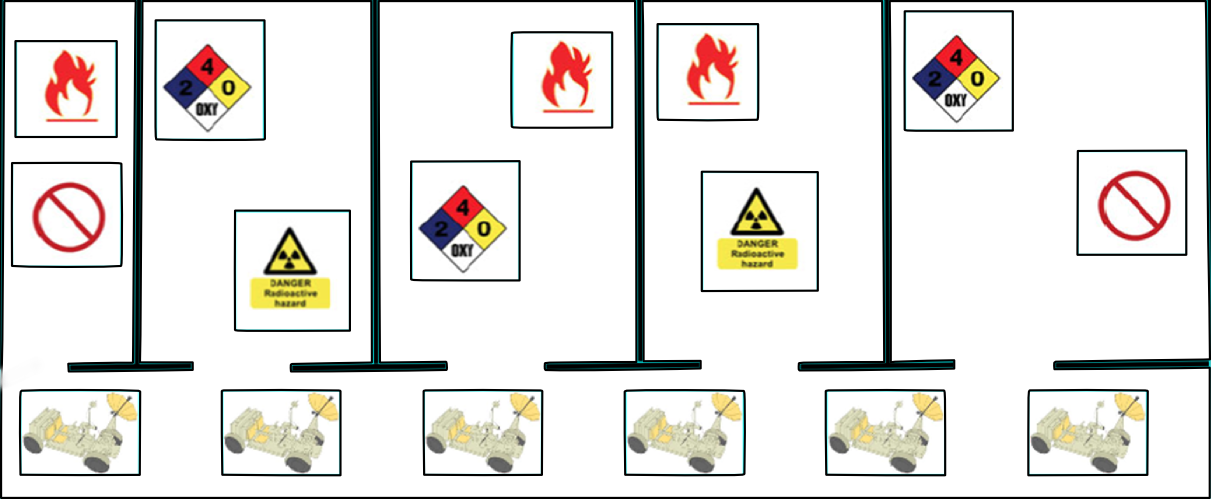}
\caption{Vehicles should patrol five areas of a warehouse. Each area has two objects of different attribute and therefore different value. The vehicles, according to their capabilities, should choose a joint action that will minimise their common cost function.}
\label{fig:example}
\end{figure}


In order to create the performance model of the proposed decision making process we need to define a utility function. Utility functions have been used as metrics of the robots coordination efficacy in applications, such as \citep{utcon1,utcon2,utcon3,utcon4,dta2}. Since the players of the game will maximise their expected reward the utility function can be seen as the negative of a cost function. Therefore we can include in the utility any control costs of the robots and environmental elements of the robots' tasks and the constraints that might arise from specific tasks. The function of the control cost should take into account the spatial characteristics of the problem like the cost to the robot to move towards to a specific position. The environmental part of the cost function takes into account costs that arose from the nature of the coordination problem and can also include the quality of the sensors of a robot, the aptness of the robots to perform specific tasks, etc.

We will use cooperative robot teams, robots who have different sensors and capabilities, but it is possible to use a similar controller in swarms of robots that have identical specifications. The differences between the robots can also be expressed in terms of their endurance in a specific environment, their efficiency to accomplish a specific task and the presence of the correct sensor to identify a specific threat. Fuzzy variables can be used to quantify robots' efficacy. Two fuzzy variables that can be used are sensors' quality and battery life. The values of battery life are short, fair, long and values of sensors' quality are low, medium and high. If a robot is not equipped with a specific sensor then its efficiency to detect an event is zero. Each robot, using a controller like the one that is depicted in Figure \ref{our_controller}, should then coordinate  with the other robots in order to efficiently choose a region to patrol.

The robots should coordinate and choose a region which they will patrol based on the possible threats that should be detected in each area, their sensors' specifications and the actions of the other robots. Each robot can choose only one region to patrol, but many robots can choose the same area. 


Each material in the warehouse has different significance and therefore each room has a different value depending on the objects that are stored there  which can be incorporated in $E_{i}(\cdot)$. In a game theoretic terminology we can define a potential game $\Gamma$ with $\mathcal{I}$ players which have $\mathcal{N}$ available actions. The utility function that is associated with room $n$ can be defined as:
\begin{displaymath}
r_{n}(s)=\sum_{\forall i, s^{i}=n }r_{i}^{m}(s^{i})+ \sum_{\forall s^{i}=n, n \in \mathcal{N} }r_{i}^{e}(s^{i},s^{-i}).
\label{eq:target_utility}
\end{displaymath}
where $r_{i}^{m}(s^{i})$, is a function that depends on the initial position of the robot and the room $n$, $n \in \mathcal{N} $ and represent the robot's cost to move toward a specific area and $r_{i}^{e}(s^{i},s^{-i})$ is the environmental part of the cost. 

Logic constraints such as the fact that a robot should not visit an area which it has not the appropriate sensors to patrol or that robots with the same attributes (sensors) should not choose the same area can be included in (\ref{eq:target_utility}) using penalty terms in $r_{i}^{e}(s^{i},s^{-i})$, i.e if a robot $i$ chooses to select an area which is not suitable of its sensors or then $r_{i}^{e}(s^{i},s^{-i})$ should have a negative or zero value. Similarly if two robots $i$ and $j$ with the same sensors choose the same area then $r_{i}^{e}(s^{i},s^{-i})$ and $r_{j}^{e}(s^{j},s^{-j})$ should also have a negative or zero value.      
The global utility that the robots will share is defined as:
\begin{equation}
r_{g}(s) =\sum_{n=1}^{n=\mathcal{N}}r_{n}(s)
\label{eq:gu}
\end{equation}

\section{The learning process}
\label{new_algorithm}
In this section we present a combination of fictitious play and extended Kalman filters as the algorithm that we will use in the learning block of the proposed decision making module. We briefly present the classic fictitious play algorithm and how it can be combined with extended Kalman filters in a decision making algorithm.

\subsection{Fictitious play}
Fictitious play \citep{brown_fict}, is a widely used learning technique in game theory. In fictitious play each player chooses
his action according to the best response of his beliefs about his opponent's
strategy.

Initially each player has some prior beliefs about the strategy
that his opponent uses to choose an action. These beliefs are expressed by a weighting function $\kappa_{t}^{i \rightarrow j}(s^j)$. Where $\kappa_{t}^{i \rightarrow j}(s^j)$ denotes the weight function of Player $i$ that Player $j$ will play action $s^{j}$ at the $t_{th}$ iteration.   The players, after each iteration, update the weight functions, and therefore their beliefs, about their opponents' strategy and
play again the best response to their beliefs. More formally, in the beginning of a game players maintain some arbitrary non-negative initial weight functions
\mbox{$\kappa_{0}^{i \rightarrow j}(s^j)$, $i=1, \ldots, \mathcal{I}$ , $j \in \{1,\ldots, \mathcal{I} \} \backslash \{i\}$} that are updated using the formula \citep{learning_in_games}:
\begin{equation}
\kappa_{t}^{i\rightarrow j}(s^{j}) = \kappa_{t-1}^{i\rightarrow j}(s^{j})+\mathfrak{I}_{s^j_t=s^j}
\label{eq:kappa}
\end{equation}
for each $j$, where
$\mathfrak{I}_{s^j_t=s^j}=\left\{\begin{array}{cl}1&\mbox{if$s^j_t=s^j$}\\0&\mbox{otherwise.} \end{array}\right\}$. 

Equation (\ref{eq:kappa}) suggests that all the observed actions have the same impact. Therefore players assume that their opponents choose their actions using a fixed mixed strategy. It is natural then to use a multinomial distribution to approximate an opponent's mixed strategy. The parameters of the multinomial distribution can be estimated using the maximum likelihood method. The mixed strategy of opponent $j$ is then estimated from the following formula:

\begin{equation}
\sigma_{t}^{i \rightarrow j}(s^{j})=\frac{\kappa^{i \rightarrow j}_{t}(s^j)}{\sum_{s' \in S^j}\kappa^{i \rightarrow j}_{t}(s')}.
\label{eq:fp1p}
\end{equation}


\subsection{Fictitious play as a state space model}
A more realistic assumption than using (\ref{eq:kappa}) is to presume that players are intelligent and change their strategies according to the other players' actions. We follow \cite{cj} and will represent the fictitious play process as a state-space model. According to the state space model each player has a propensity $Q_{t}^{i}(s^{i})$ to play each of their available actions $s^{i} \in S^{i}$, and then to  form a strategy based on these propensities. Finally players can choose  an action  based on their strategy and the best response decision rule. Because players have no information about the evolution of their opponents' propensities, and under the assumption that the changes in propensities are small from one iteration of the game to another, we model propensities using a Gaussian autoregressive prior on all propensities \citep{cj}. 
We set $Q_{0}\sim N(0,I)$, where $I$ is the identity matrix, and recursively update the value of  $Q_{t}$ according to the value of $Q_{t-1}$ as follows:
\begin{equation}
Q(s_{t})=Q(s_{t-1})+\eta_{t}
\label{eq:propen}
\end{equation}
where $\eta_{t}\sim N(0,\chi^{2}I)$. 

Similarly to the sigmoid function that is widely used in the neural network literature \cite{cbishop}, to relate the weights and the observation layer of a neural network, we assume that propensities are connected with players' actions by the following Boltzman formula for every $s^{j} \in S^{j}$
\begin{equation}
\mathfrak{I}_{s^j_t=s^j}=\frac{e^{(Q^{j}(s^{j})/\tau)}}{\sum_{\tilde{s} \in S^{j}}e^{(Q_{t}(\tilde{s})/\tau)}}.
\label{eq:meas}
\end{equation}

\subsection{Kalman filters and Extended Kalman filters}
Our objective is to estimate Player $i$'s opponent propensity and thus to estimate the marginal probability $p(Q_{t},s_{1:t})$. This objective can be represented as a Hidden Markov Model (HMM). HMMs are used to predict the value of an unobserved variable $c_{t}$, the hidden state,  using the observations of another variable $z_{1:t}$.  There are two main assumptions in the HMM representation. The former one is that the probability of being at any state $c_{t}$ at time $t$ depends only at the state of time $t-1$, $c_{t-1}$. The latter one is that an observation at time $t$ depends only on the current state $c_{t}$. One of the most common methods to estimate $p(c_{1:t},z_{1:t})$ is Kalman filters and its variations. Kalman filter \citep{kalman} is based on two assumptions, the first is that the state variable is Gaussian. The second is that the observations are the result of a linear combination of the state variable. Hence Kalman filters can be used in cases which are represented as the following state space 
model:
\begin{equation}
\begin{array}{rl}
c_t= &  Ac_{t-1}+\xi_{t-1}  \textrm{   hidden layer} \\
y_{t}= & Bc_{t}+\zeta_{t} \textrm{   	observations} 
\end{array}
\end{equation}
where $\xi_t$ and $\zeta_t$ follow a zero mean normal distribution with covariance matrices $\Xi=q_{t}I$ and $Z=r_{t}I$ respectively, and $A$, $B$ are linear transformation matrices. When the distribution of the state variable $c_t$ is Gaussian then $p(c_{t}|y_{1:t})$ is also a Gaussian distribution, since $y_{t}$ is a linear combination of $c_{t}$. Therefore it is enough to estimate its mean and variance to fully characterise $p(c_{t}|y_{1:t})$. 

Nevertheless in the state space model we want to implement, the relation between Player $i$'s opponent propensity and his actions is not linear (\ref{eq:meas}). Thus a more general form of state space model should be used, such as:  
\begin{equation}
\begin{array}{rl}
c_t&=f(c_{t-1})+\xi_{t} \\
y_{t}&=h(x_{t})+ \zeta_{t} 
\end{array}
\label{eq:ekf_state}
\end{equation}
where $f(\cdot)$ and $h(\cdot)$, are non-linear functions, $\xi_{t}$ and $\zeta_{t}$ are the hidden and observation state noise respectively, with zero mean and covariance matrices $\Xi=q_{t}I$ and $Z=r_{t}I$ respectively. The distribution of $p(c_{t}|y_{1:t})$ is not a Gaussian distribution because $f(\cdot)$ and  $h(\cdot)$ are non-linear functions. Extended Kalman filter (EKF) provides a simple method to overcome this shortcoming by using a first order Taylor expansion to approximate the distributions of the sate space model in (\ref{eq:ekf_state}). In particular if we let $c_{t}=m_{t-1}+\epsilon$, where $m_{t}$ denotes the mean of $c_{t}$ and $\epsilon \sim N(0,P)$, we can rewrite (\ref{eq:ekf_state}) as: 
\begin{equation}
\begin{array}{rl}
c_t&=f(m_{t-1}+\epsilon)+w_{t-1}=f(m_{t-1})+F_{c}(m_{t-1})\epsilon +\xi_{t-1}\\
y_{t}&=h(m_{t}+\epsilon)+\zeta_{t}=h(m_{t})+H_{c}(m_{t})\epsilon+\zeta_{t}
\end{array}
\label{eq:ekf_taylor}
\end{equation}
where $F_{c}(m_{t-1})$ and $H_{c}(m_{t})$ is the Jacobian matrix of $f$ and $h$ evaluated at $m_{t-1}$ and $m_{t}$, respectively. If we use the transformations of (\ref{eq:ekf_taylor}) then $p(c_{t}|y_{1:t})$ is a Gaussian distribution.

Since $p(c_{t}|y_{1:t})$ is a Gaussian distribution to fully characterise it we need to evaluate its mean and its variance. The EKF process \citep{ekf1,ekf2} estimates this mean and variance in two steps the prediction and the update step. In the prediction step at any iteration $t$ the distribution of the state variable is estimated based on all the observations until time $t-1$, $p(c_{t}|y_{1:t-1})$. The distribution of $p(c_{t}|y_{1:t-1})$ is Gaussian and we will denote its mean and variance  as  $m_{t}^{-}$ and $P_{t}^{-}$ respectively. During the update step the estimation of the prediction step is corrected in the light of the new observation at time $t$, so we estimate $p(c_{t}|y_{1:t})$. This is also a Gaussian distribution and we will denote its mean and variance  as  $m_{t}$ and $P_{t}$ respectively.

The estimate of the state $c$ for and its variance prediction for $p(c_{t}|y_{1:t-1})$ and $p(c_{t}|y_{1:t})$ are evaluated based on the prediction and update steps of the EKF process \citep{ekf1,ekf2} respectively as follows: \\
\textbf{Prediction Step}
\begin{align}
\label{eq:pdstep}
\bar{c}_{t}= &f(\hat{c}_{t-1}) \nonumber \\
P_{t}^{-}=&F(m_{t-1})P_{t-1}F(m_{t-1})+W_{t-1} \nonumber
\end{align}
where $\bar{c}_t$ and $\hat{c}_{t-1}$ are the estimates of $c$ for $p(c_{t}|y_{1:t-1})$ and $p(c_{t-}|y_{1:t-1})$ respectively, and the $j,j'$ element of $F(m_{t})$ is defined as
\begin{displaymath}
[F(m_{t}^{-})]_{j,j'}=\frac{\partial f(c_{j},r)}{\partial c_{j'}}\arrowvert_{c=m_{t}^{-}, q=0}	
\end{displaymath}
\textbf{Update Step}
\begin{eqnarray}
\label{eq:updatestep}
v_t&=&z_{t}-h(\bar{c}_{t}) \nonumber \\
S_{t}&=&H(\bar{c}_{t})P_{t}^{-}H^{T}(\bar{c}_{t})+Z \nonumber \\
K_{t}&=&P_{t}^{-}H^{T}(\bar{c}_{t})S_{t}^{-1} \nonumber \\
\hat{c}_{t}&=&\bar{c}_{t}+K_{t}v_{t}\nonumber \\
P_{t}&=&P_{t}^{-}-K_{t}S_{t}K_{t}^{T} \nonumber
\end{eqnarray}
where $z_{t}$ is the observation vector and the $(j,j)'$ element of $H(c_{t})$ is defined as:  
\begin{displaymath}
[H(m_{t}^{-})]_{j,j'}=\frac{\partial h(c_{j},r)}{\partial c_{j'}}\arrowvert_{c=m_{t}^{-}, r=0}	
\end{displaymath}

\subsection{Fictitious play and EKF}

For the rest of this paper we will only consider inference over a single opponent mixed
strategy in fictitious play. Separate estimates will be formed
identically and independently for each opponent. We therefore
consider only one opponent, and we will drop all dependence on player $i$, and
write $s_{t}$, $\sigma_{t}$ and $Q_{t}$ for Player $i$'s opponent's action, strategy and propensity respectively. Moreover for any vector $x$, $x[j]$ will denote the $j_{th}$ element of the vector and for any matrix $y$, $y[i,j]$ will denote the $(i,j)_{th}$ element of the matrix.

We can use the following state space model to describe the fictitious play process:
\begin{equation}
\begin{array}{cc}
Q_{t}&=Q_{t-1}+\xi_{t-1} \\
\mathfrak{I}_{s^j_t=s^j}&= h(Q_{t})+\zeta_{t} 
\end{array}
\label{eq:fpekf}
\end{equation}
where $\xi_{t-1} \sim N(0,\Xi)$, is the noise of the state process and $\zeta_{t} \sim N(0,Z)$ is is the error of the observation state with zero mean and covariance matrix $Z$, which occurs because we approximate a discrete process like best response (\ref{eq:br}), using a continuous function $h(\cdot)$, where $h_{s(k)}(Q)=\frac{exp(Q_{t}[s(k)]/\tau)}{\sum_{\tilde{s} \in S} exp(Q_{t}[\tilde{s}]/\tau)}$, where $\tau$ is a temperature parameter. Hence we can combine the EKF with fictitious play as follows.

\begin{table*}
\begin{center}
\begin{tabular}{p{12cm}}
\hline
\hline
\begin{enumerate}
\item At time $t$ agent $i$ maintains   estimates  of his opponent's propensities up to time $t-1$, $\hat{Q}^j_{t-1}$, with covariance $P^j_{t-1}$ of his distribution.

\item Agent $i$ predicts his opponents' propensities $\bar{Q}_{tk}^j,\ j\in \{1,...,\mathcal{I}\}\backslash\{i\},\ k\in S^j$  using (\ref{eq:predekffp}).

\item Based on the propensities in 2 each agent updates his beliefs about his opponents' strategies using (\ref{eq:strategies}).

\item Agent $i$ chooses an action based on the beliefs in 3 and applies best response decision rule.

\item The agent $i$  observes his opponents' action $s^j_{tk},\ j\in \{1,...,\mathcal{I}\}\backslash\{i\}$.

\item The agent update its estimates of all of its opponents' propensities using extended Kalman Filtering to obtain $\hat{Q}^j_t,\ j\in \{1,...,\mathcal{I}\}\backslash\{i\}$.
\end{enumerate}
\\
\hline
\hline
\end{tabular}
\caption{EKF based fictitious play algorithm.}
\label{skata}
\end{center}
\end{table*}

At time $t-1$ Player $i$ has observed action $s_{t-1}$ and based on the update step of the EKF process has an estimate of his opponent's propensity, $\hat{Q}_{t-1}$ with the variance $P_{t-1}$. Then at time $t$ he uses EKF prediction step to estimate his opponent's propensity. The estimate and its variance are:
\begin{equation}
\begin{array}{r l}
\bar{Q}_{t}=&\hat{Q}_{t-1} \\
P_{t}^{-}=&P_{t-1}+\Xi 
\end{array}
\label{eq:predekffp}
\end{equation}
Player $i$ then evaluates his opponents strategies using his estimations as:
\begin{equation}
 \sigma_{t}(s_{t}=k)=\frac{exp(\bar{Q}(s_{t}=k) / \tau)}{\sum_{\tilde{s} \in S}exp(\bar{Q}(\tilde{s}) /\tau)}.
\label{eq:strategies}
\end{equation}
Player $i$ then uses the estimate of his opponent strategy (\ref{eq:strategies}) and best responses (\ref{eq:br}), to choose an action. After observing his opponent's action $s_{t}$, Player $i$ correct his estimate about his opponent's propensity using the update equations of EKF process. The update equations are:
\begin{eqnarray}
v_t&=&\mathfrak{I}_{s^j_t=s^j}-h(\bar{Q}_{t}) \nonumber \\
S_{t}&=&H(\bar{Q}_{t})P_{t}^{-}H^{T}(\bar{Q}_{t})+Z \nonumber \\
K_{t}&=&P_{t}^{-}H^{T}(\bar{Q}_{t})S_{t}^{-1} \nonumber \\
\hat{Q}_{t}&=&\bar{Q}_{t}+K_{t}v_{t} \nonumber \\
P_{t}&=&P_{t}^{-}-K_{t}S_{t}K_{t}^{T} \nonumber
\label{eq:fpupdate}
\end{eqnarray}
The Jacobian matrix $ H(\bar{Q}_{t})$ is defined as \\ $[H(\bar{Q}_{t})]_{j,j'}=\left\{\begin{array}{cl} \frac{\sum_{j \neq j'}\exp(\bar{Q}_{t}[j])\exp(\bar{Q}_{t})}{(\sum_{j}\exp(\bar{Q}_{t}[j]))^2}&\mbox{if
$j=j'$}\\- \frac{\exp(\bar{Q}_{t}[j])\exp(\bar{Q}_{t}[j'])}{(\sum_{j}\exp(\bar{Q}_{t}[j]))^2} &\mbox{if $j \neq$ j'}\end{array}\right.$. 

Table \ref{skata} summarises the fictitious play algorithm when EKF is used to predict opponents strategies. 

\subsection{EKF fictitious play as part of the reasoning cycle of Jason}
As described in Section \ref{sect:coop_cycle} robots actions and reasoning cycles are not synchronised. After each action of the robot team the utility function can be updated and a new reasoning cycle that will include the EKF fictitious play learning algorithm can be started. In particular steps 1 and 2 of the the reasoning cycle of Jason, (Section \ref{sec:background}), can be updated using the proposed learning process. The update of the propensities can be seen as \emph{Belief base update}, since when agents update their propensities, they update their beliefs about other agents actions for a given state of the environment. Then the action selection of EKF fictitious play is the \emph{Trigger Event Selection}. Note here that instead of Best Response other alternatives can be used with EKF fictitious play algorithm as part of the \emph{Trigger Event Selection} such as Smooth Best Response \cite{learning_in_games}.   

\section{The Main Results}
\label{theory}
In this section we present our convergence results for games with at least one pure Nash equilibrium. The results  are for the EKF fictitious play algorithm of Table \ref{skata}, when the covariance matrices $\Xi$ and $Z$ are defined as $\Xi=(\tilde{\xi}+\epsilon)I$ and $Z=(1/t)I$ respectively, where $\tilde{\xi}$  is a constant, $\epsilon$ is an arbitrarily small Gaussian random variable, $\epsilon \sim N(0,\Psi)$, $t$ is the $t_{th}$ iteration of fictitious play, and $I$ is the identity matrix. 

The EKF fictitious play algorithm has the following properties: 
\begin{prop}
\label{prop1}
If at iteration $t$ of the EKF fictitious play algorithm, action $k$ is played from Player $i$'s opponent, then the estimate of his opponent propensity to play action $k$ increases, $\hat{Q}_{t-1}[k]<\hat{Q}_{t}[k]$. Moreover if we denote $\Delta[i]$ as  $\Delta[i]=\hat{Q}_{t}[i]-\hat{Q}_{t-1}[i]$, then $\Delta[k]>\Delta[j] \forall j \in S^{j}$, where $S^{j}$ is the action space of the $j_{th}$ opponent of Player $i$. Therefore since $\sum_{k\in S^{j}}{\sigma(s^{j}=k)}=1$, $\sigma_{t}^{j}$ will be also increased. 
\end{prop}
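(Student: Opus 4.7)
The plan is to compute $\Delta = \hat{Q}_t - \hat{Q}_{t-1}$ explicitly from the EKF update, identify the sign pattern induced by observing action $k$, exploit the structure of the softmax Jacobian, and finally deduce the mixed-strategy conclusion by a softmax monotonicity argument. First I would note that the prediction step (\ref{eq:predekffp}) gives $\bar{Q}_t = \hat{Q}_{t-1}$, so $\Delta = K_t v_t$ where the innovation $v_t = e_k - h(\bar{Q}_t)$ satisfies $v_t[k] = 1-h_k > 0$, $v_t[j] = -h_j < 0$ for $j\neq k$, and $\mathbf{1}^T v_t = 0$. Writing $h := h(\bar{Q}_t)$, the softmax Jacobian factorises as $H = \mathrm{diag}(h) - h h^T$; it is symmetric and positive semidefinite with kernel $\mathrm{span}(\mathbf{1})$, and a direct calculation gives the useful identity $H e_k = h_k v_t$, so that $v_t$ lies in the range of $H$.

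Substituting $v_t = H(e_k/h_k)$ into the Kalman update (using symmetry of $H$) yields
\begin{equation*}
\Delta = \frac{1}{h_k}\, P_t^-\, H \bigl(H P_t^- H + Z\bigr)^{-1} H\, e_k .
\end{equation*}
The second step is to analyse the symmetric positive semidefinite operator $T := H(H P_t^- H + Z)^{-1} H$. With $Z = (1/t) I \succ 0$ and $P_t^- \succeq 0$ (inherited from $P_0 = I$ and the EKF Riccati recursion with $\Xi = (\tilde\xi+\epsilon) I \succeq 0$), one can diagonalise $H = U\, \mathrm{diag}(0,\lambda_2,\ldots,\lambda_n)\, U^T$ and show that, in this eigenbasis, $T$ coincides with the orthogonal projection $\Pi = I - \mathbf{1}\mathbf{1}^T/n$ up to a symmetric correction of order $1/t$ that again annihilates $\mathbf{1}$. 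Since $\Pi e_k = e_k - \mathbf{1}/n$ attains its unique strict maximum at coordinate $k$, the positive semidefinite character of the correction prevents this maximum from being flipped.

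The third step is to transport this coordinate ordering through the operator $P_t^-$. Writing $P_t^-$ as a positive scalar multiple of the identity plus a symmetric positive semidefinite perturbation, and using the sign pattern of $T e_k$ established above (one strictly positive entry at $k$, strictly smaller entries elsewhere), one verifies coordinate-wise that $(P_t^- T e_k)[k] > (P_t^- T e_k)[j]$ for every $j \neq k$ and in particular is strictly positive. This yields both $\Delta[k] > 0$ and $\Delta[k] > \Delta[j]$ for all $j \neq k$. The main obstacle is precisely this sign-preservation through the non-scalar operator $P_t^-$: ruling out that off-diagonal entries of $P_t^-$ reverse the coordinate ordering requires a careful use of the symmetry of $P_t^-$ together with the global sign structure of $T e_k$ and the fact that the only direction in which $P_t^-$ can act arbitrarily, $\mathbf{1}$, is annihilated by $T$.

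Finally, the claim on the mixed strategy follows directly. From $\sigma_t(k) \propto e^{\hat{Q}_t[k]/\tau}$ and $\sigma_{t-1}(k) \propto e^{\hat{Q}_{t-1}[k]/\tau}$ one obtains
\begin{equation*}
\frac{\sigma_t(k)}{\sigma_{t-1}(k)} = \frac{\sum_j e^{\hat{Q}_{t-1}[j]/\tau}}{\sum_j e^{(\Delta[j]-\Delta[k])/\tau}\, e^{\hat{Q}_{t-1}[j]/\tau}} > 1,
\end{equation*}
because each summand in the denominator with $j \neq k$ is strictly smaller than its numerator counterpart (by $\Delta[j] < \Delta[k]$ and $\tau > 0$), while the $j = k$ summands coincide.
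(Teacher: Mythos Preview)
Your operator-theoretic decomposition is elegant, and the identity $H e_k = h_k v_t$ (so that $v_t = H(e_k/h_k)$) is a genuinely useful observation. However, the proposal has two real gaps.

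First, the claim in your second step that $T := H(H P_t^{-} H + Z)^{-1} H$ coincides with the orthogonal projector $\Pi = I - \mathbf{1}\mathbf{1}^T/n$ up to an $O(1/t)$ correction is not correct in general. In the eigenbasis of $H$, with $\Lambda = \mathrm{diag}(0,\lambda_2,\ldots,\lambda_n)$ and $\tilde P := U^T P_t^{-} U$, the restriction of $T$ to $\mathbf{1}^{\perp}$ tends, as $Z \to 0$, to $\Lambda_r(\Lambda_r \tilde P_r \Lambda_r)^{-1}\Lambda_r = \tilde P_r^{-1}$, \emph{not} to the identity on that subspace. Hence $T e_k$ need not inherit the simple sign pattern of $\Pi e_k = e_k - \mathbf{1}/n$; whatever coordinate ordering $T e_k$ has already depends on $P_t^{-}$.

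Second, and more seriously, your ``third step'' --- transporting the coordinate ordering through $P_t^{-}$ --- is exactly where the argument is incomplete. You correctly identify it as the main obstacle, but the resolution you sketch does not work: writing $P_t^{-} = \alpha I + B$ with $B \succeq 0$ does not imply that $P_t^{-}$ preserves the coordinate-wise maximum of a vector, since a positive semidefinite matrix with off-diagonal entries can freely permute coordinate rankings. The fact that $\mathbf{1} \in \ker T$ does not help either, because $P_t^{-}$ has no reason to act as a scalar on $\mathbf{1}^{\perp}$. Without a concrete structural constraint on the off-diagonal entries of $P_t^{-}$ --- which the EKF Riccati recursion does not obviously supply --- this step is unsupported.

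The paper's argument is quite different and sidesteps $P_t^{-}$ entirely. For two actions it performs an explicit $2\times 2$ computation. For more than two actions it argues that once $1/t$ is small one may take $K_t \simeq H$ directly (effectively replacing your operator $P_t^{-}T$ by $H$ itself), and then evaluates $\Delta[i] - \Delta[j] = (H v_t)[i] - (H v_t)[j]$ by expanding $H$ entrywise as $H[i,i] = \sum_{\ell \neq i}\sigma(i)\sigma(\ell)$ and $H[i,\ell] = -\sigma(i)\sigma(\ell)$, reducing the sign question to an elementary inequality in the $\sigma(\cdot)$'s, handled by a short contradiction argument. That route is cruder in that it relies on the small-$Z$ approximation rather than tracking $P_t^{-}$, but it is concrete and avoids the sign-preservation problem that your spectral approach leaves open.
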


\begin{proof}
The proof of Proposition \ref{prop1} is on Appendix \ref{append1}.
\end{proof}
Proposition \ref{prop1} implies that player $i$, when he uses EKF fictitious play, learn his opponent's strategy and eventually they will choose the action that will maximise their reward base on their estimation. Nevertheless there are cases where players may change their action simultaneously and become trapped in a cycle instead of converging in a pure Nash equilibrium. As an example we consider the symmetric game that is depicted in Table \ref{tab:simcoord}.      
This is a simple coordination game with two pure Nash equilibria the joint actions $(U,L)$ and $(D,R)$. In the case were the two players start from joint action $(U,R)$ or $(D,L)$ and they always change their action simultaneously then they will never reach one of the two pure Nash equilibria of the game.

\begin{prop}
\label{prop2}
When the players of a game $\Gamma$ use EKF fictitious play process to choose their actions, then with high probability they will not change their action simultaneously infinitely often. 
\end{prop}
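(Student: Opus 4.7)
The plan is to exploit the small random perturbation $\epsilon$ that enters the state-noise covariance $\Xi=(\tilde\xi+\epsilon)I$ to argue that at every iteration where a simultaneous switch could occur, there is a strictly positive probability that it does not, and then to close the argument with a Borel--Cantelli-style estimate. First I would fix an iteration $t$ and describe precisely what a ``simultaneous change'' is: for every player $i$, the best response computed from the current propensity estimates $\bar Q^{i\to j}_t$ differs from the action actually played at iteration $t-1$. The decision of each player $i$ factors through the softmax strategy in \eqref{eq:strategies} and the argmax in \eqref{eq:br}, so a simultaneous switch corresponds to a joint inequality system in the vector of estimated propensities held by every player.

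Next I would use the structure of the EKF update. Each player $i$ maintains its own $\hat Q^{i\to j}_t$ and its own covariance $P^{i\to j}_t$; the perturbation $\epsilon$ enters these filters independently across the players, so the random parts of the decision thresholds of different players are independent conditional on the history. Because $h$ and the softmax are smooth and have non-vanishing derivative in a neighbourhood of any non-degenerate propensity vector, a perturbation of arbitrarily small variance translates into a distribution on $\hat Q^{i\to j}_t$ that puts positive mass on both sides of every best-response threshold that is not exactly crossed with probability zero. In particular, given any history $\mathcal F_{t-1}$ under which two or more players would deterministically switch, there is a measurable set of noise realisations of strictly positive probability on which only one of them switches; denote this lower bound by $p>0$, which by compactness of the action space and continuity of the softmax can be taken uniform over iterations (at least for bounded propensities, which in turn follows from Proposition \ref{prop1} together with the EKF gain shrinking as $Z=(1/t)I$).

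The conclusion is then a conditional Borel--Cantelli argument. Let $A_t$ be the event that a simultaneous switch occurs at iteration $t$ and let $B_t\subseteq\{\omega : A_t\text{ is geometrically possible given }\mathcal F_{t-1}\}$. On the trace $\{B_t\}$ the above shows $P(A_t\mid\mathcal F_{t-1})\le 1-p$, and the random signs are conditionally independent across $t$ because the noises $\epsilon_t$ are. Therefore, along any sample path on which $B_t$ occurs infinitely often, the conditional Borel--Cantelli lemma (Lévy's extension) gives $P(A_t\text{ i.o.}\mid B_t\text{ i.o.})=0$; on the complementary set $B_t$ occurs only finitely often, so $A_t$ occurs only finitely often deterministically. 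Combining the two cases, $P(A_t\text{ i.o.})=0$, which is precisely the high-probability statement of the proposition.

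The main obstacle I anticipate is Step 2: rigorously showing that the $\epsilon$-induced perturbation on $\hat Q^{i\to j}_t$ has a density whose support straddles the best-response thresholds with probability bounded below by a constant $p>0$ that does not shrink with $t$. Because $Z=(1/t)I$ makes the Kalman gain diffuse and the state-noise covariance is itself random, I would need a careful lower bound on the effective innovation variance (for instance, showing that $P^{i\to j}_t$ stays uniformly bounded away from singular, using the recursion $P^-_t=P_{t-1}+(\tilde\xi+\epsilon)I$ and monotonicity of the update). Once this non-degeneracy is in hand, the softmax smoothness and the independence across players make the Borel--Cantelli closure routine.
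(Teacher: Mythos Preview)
Your proposal follows essentially the same line as the paper's own proof: both isolate the random perturbation $\epsilon$ in the state-noise covariance $\Xi=(\tilde\xi+\epsilon)I$, argue that at any given iteration the probability a particular player crosses its best-response threshold is strictly less than one (the paper does this by expanding $(K_{t-1}y_{t-1})[j]$, separating the $\epsilon$-dependent piece, and reducing to $P(\epsilon>C)<1$ for a Gaussian $\epsilon$), and then close with a product-over-time argument---the paper informally asserts $P(\chi_{t_1},\ldots,\chi_{t_{T+1}})\approx 0$ for large $T$, whereas you invoke conditional Borel--Cantelli. The obstacle you flag in Step~2, namely a uniform-in-$t$ lower bound $p>0$, is not addressed in the paper either; its argument establishes the strict inequality at a single step and simply asserts the conclusion, so your version is if anything more careful than the original.
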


\begin{proof}
The proof of Proposition \ref{prop2} is on Appendix \ref{append2}.
\end{proof}

Based on Proposition \ref{prop1} and \ref{prop2} we can infer the following propositions and theorem. 

\begin{prop}
\label{prop3}
(a) If $s$ is a pure Nash equilibrium of a game $\Gamma$, and $s$ is played at date $t$ in the process of EKF fictitious play, $s$ will be played at all subsequent dates. That is, strict Nash equilibria are absorbing for the process of EKF fictitious play.

(b) Any pure strategy steady state of EKF fictitious play must be a Nash equilibrium.
\end{prop}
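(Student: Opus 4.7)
My plan treats the two parts separately; Proposition 1 is the workhorse for both, while Proposition 2 is not invoked since (a) concerns a non-switching trajectory and (b) assumes a steady state already exists.

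For part (a), I will argue by induction on $t'\geq t$ that $s^{*}$ continues to be played. At the inductive step, fix a player $i$; two facts are in hand. Best-response selection at $t'$ gives $r^{i}(s^{*i},\sigma_{t'}^{i\to -i})\geq r^{i}(s^{i},\sigma_{t'}^{i\to -i})$ for every alternative $s^{i}$, while strictness of the pure Nash equilibrium gives $r^{i}(s^{*i},s^{*-i})-r^{i}(s^{i},s^{*-i})>0$ for every $s^{i}\neq s^{*i}$. Proposition 1 applied to each opponent $j$ shows that $\sigma_{t'+1}^{i\to j}$ places strictly more mass on $s^{*j}$ than $\sigma_{t'}^{i\to j}$ did, with the propensity gap $\hat{Q}^{i\to j}[s^{*j}]-\hat{Q}^{i\to j}[k]$ widening strictly for every $k\neq s^{*j}$. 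Expanding the payoff difference $r^{i}(s^{*i},\sigma_{t'+1}^{i\to -i})-r^{i}(s^{i},\sigma_{t'+1}^{i\to -i})$ linearly in $\sigma^{-i}$ and combining these facts is meant to produce the required nonnegativity, so best response continues to pick $s^{*i}$.

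The main obstacle is that the softmax in (\ref{eq:strategies}) is nonlinear, so $\sigma_{t'+1}^{i\to -i}$ is not a convex combination of $\sigma_{t'}^{i\to -i}$ and the vertex $\delta_{s^{*-i}}$; the standard linearity argument from classical fictitious play does not transfer directly. My intended fix is a continuity argument leveraging strictness of the equilibrium: there exists a neighbourhood $U^{i}\subset\Delta^{-i}$ of $\delta_{s^{*-i}}$ on which $s^{*i}$ is the unique best response, and I will show that iterated updates with observation $s^{*-i}$ drive $\sigma^{i\to -i}$ into $U^{i}$ and keep it there. The delicate step is controlling the tangential component of the simplex shift well enough to guarantee entry into $U^{i}$; the strict payoff margin at the vertex should absorb this tangential error once the propensity gap is large enough, which Proposition 1 enforces.

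For part (b), a pure strategy steady state is a joint action $s$ with some date $t_{0}$ from which $s$ is played at every subsequent iteration. Each player $i$ then observes the same opponent actions $s^{-i}$ at every date, so iterating Proposition 1 forces the propensity gap $\hat{Q}_{t}^{i\to j}[s^{j}]-\hat{Q}_{t}^{i\to j}[k]$ to grow strictly at every step; using the Kalman gain lower bound coming from the prescribed covariance scaling $\Xi=(\tilde{\xi}+\epsilon)I$, $Z=(1/t)I$, I will argue the increments are not summable, so $\sigma_{t}^{i\to j}\to\delta_{s^{j}}$ as $t\to\infty$. Since $s^{i}$ is the best response to $\sigma_{t}^{i\to -i}$ at every $t\geq t_{0}$ and the argmax correspondence has a closed graph, the limit yields $r^{i}(s^{i},s^{-i})\geq r^{i}(s'^{i},s^{-i})$ for all $s'^{i}$; since this holds for every $i$, $s$ satisfies the Nash condition (\ref{eq:nashutil}). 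The one nontrivial step here is the quantitative lower bound on the propensity-gap increments, since Proposition 1 is stated only qualitatively.
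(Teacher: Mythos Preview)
Your overall strategy coincides with the paper's: both parts are driven by Proposition~1, with an inductive best-response argument for (a) and a belief-concentration/contrapositive argument for (b); Proposition~2 is not used. The paper's proof, however, is far terser than yours. For (a) it simply asserts that since Proposition~1 forces each $\sigma_{t+1}^{i\to j}(s^{*j})$ to increase, the best response at $t+1$ is again $\hat{s}$; it does not engage with the softmax nonlinearity or the tangential-drift issue you flag, nor does it invoke a neighbourhood $U^{i}$. For (b) it just says that repeated observation of $s^{-i}$ makes the assessments ``concentrated at that profile'' by Proposition~1, so a non-Nash profile would eventually trigger a deviation; no quantitative lower bound on the propensity-gap increments is established. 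In short, the concerns you raise are mathematically legitimate refinements, but the paper treats them as immediate consequences of Proposition~1 and does not supply the additional arguments you are planning; if your goal is to match the paper, you can drop the neighbourhood and non-summability machinery and argue at the same informal level.
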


\begin{proof}
Consider the case where players beliefs $\hat{\sigma}_{t}$, are such that their optimal choices correspond to a strict Nash equilibrium $\hat{s}$. In EKF fictitious play process players' beliefs are formed identically and independently for each opponent based on equation (\ref{eq:strategies}). By Proposition \ref{prop1} we know that players' estimations about their opponents' propensities and therefore their strategies will increase for the actions that are included in $\hat{s}$. Thus the best response to their beliefs $\hat{\sigma}_{t+1}$ will be again $\hat{s}$ and since $\hat{s}$ is a Nash equilibrium they will not deviate from it. Conversely, if a player remains at a pure strategy profile, then eventually the assessments will become concentrated at that profile, because of Proposition \ref{prop1} and so if the profile is not a Nash equilibrium, one of the players would eventually want to deviate.
\end{proof}

\begin{prop}
\label{prop4}
Under EKF fictitious play, if the beliefs over each player's choices converge, the strategy profile corresponding to the product of these distributions is a Nash equilibrium. 
\end{prop}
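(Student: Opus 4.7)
The plan is to follow the classical fictitious-play template for convergence results of this shape. Denote the assumed limit by $\sigma^{*}=(\sigma^{*1},\ldots,\sigma^{*\mathcal{I}})$, so that by hypothesis $\sigma_{t}^{i\to j}\to\sigma^{*j}$ for every ordered pair $i\neq j$. The goal is to show, for each player $j$, that every action in the support of $\sigma^{*j}$ is a best response to $\sigma^{*-j}$, which is precisely the condition~(\ref{eq:nashutil}) that defines a Nash equilibrium.

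Step one, which is purely deterministic, is to argue that player $j$'s actions are eventually concentrated on best responses to $\sigma^{*-j}$. The decision rule of Step~4 in Table~\ref{skata} is best response applied to the current estimate $\sigma_{t}^{j\to -j}$, and the hypothesis gives $\sigma_{t}^{j\to -j}\to\sigma^{*-j}$. By upper semicontinuity of the best-response correspondence defined in~(\ref{eq:br}), any limit point of the sequence of actions $s_{t}^{j}$ must lie in the best-response set to $\sigma^{*-j}$, so for all sufficiently large $t$ the action $s_{t}^{j}$ is one of those best responses.

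Step two is to identify the limiting belief $\sigma^{*j}$ with the long-run empirical frequency of $j$'s plays, and this is where I would lean on Proposition~\ref{prop1}: observing action $k$ strictly increases $\sigma_{t}^{i\to j}(k)$ by strictly more than any other component is moved. Convergence $\sigma_{t}^{i\to j}\to\sigma^{*j}$ then forces the EKF innovations $v_{t}$ to average to zero, since otherwise Proposition~\ref{prop1} would drive the components of $\sigma_{t}^{i\to j}$ persistently away from their limits. The choice $\Xi=(\tilde{\xi}+\epsilon)I$ keeps the predicted covariance $P_{t}^{-}$ bounded below, so the Kalman gain $K_{t}$ stays bounded away from zero in the directions in which the softmax Jacobian $H(\bar{Q}_{t})$ is nondegenerate, and this is what makes the averaging argument rigorous. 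The conclusion of this step is that the empirical frequency of $j$'s plays coincides with $\sigma^{*j}$ in the limit.

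Combining the two steps, the support of $\sigma^{*j}$ is contained in the best-response set to $\sigma^{*-j}$ for every $j$, so $\sigma^{*}$ satisfies~(\ref{eq:nashutil}) and is therefore a Nash equilibrium. The hard part, I expect, will be step two: the EKF is nonlinear and adaptive, so translating "beliefs converge" into "observation frequencies converge to those beliefs" requires careful control of both the Kalman gain and the softmax Jacobian, and one has to rule out the degenerate scenario in which $\hat{Q}_{t}$ drifts along a direction to which the softmax $h(\cdot)$ is insensitive. Proposition~\ref{prop2} is expected to play a supporting role here by excluding the simultaneous-switching cycles that could otherwise prevent the empirical frequencies from settling onto a single limiting distribution.
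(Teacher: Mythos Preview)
Your Step~1 is essentially what the paper uses, but you then take a detour through Step~2 that the paper avoids entirely. The paper's argument is a two-line contradiction: suppose the beliefs converge to $\hat{\sigma}$; if $\hat{\sigma}$ were \emph{not} a Nash equilibrium, some player would eventually best-respond away from $\hat{\sigma}$ (this is your Step~1 reasoning), and then Proposition~\ref{prop1} says the other players' beliefs about that player would be pushed in the direction of the new action, contradicting the assumed convergence. That is the whole proof.

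The practical difference is that the paper never needs to identify the limiting belief $\sigma^{*j}$ with the empirical play frequency of player~$j$. Your Step~2 attempts exactly this identification, and you are right to flag it as the hard part: for EKF-based beliefs it is not automatic (unlike classical fictitious play, where beliefs \emph{are} empirical frequencies by construction), and your sketch---innovations averaging to zero, $K_t$ bounded away from zero, nondegeneracy of $H(\bar{Q}_t)$---would require substantial additional work to make rigorous, including ruling out the softmax-kernel drift you mention. None of this is needed. Proposition~\ref{prop1} already gives the only EKF-specific fact required: observing an action strictly increases the belief assigned to it. Combined with the contradiction framing, that suffices. Proposition~\ref{prop2} plays no role here.

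So your outline is not wrong, but it is over-engineered: drop Step~2, and instead close the argument immediately after Step~1 by noting that a persistent deviation by some player would, via Proposition~\ref{prop1}, prevent the beliefs about that player from stabilising at $\sigma^{*j}$.
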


\begin{proof}
Suppose that the beliefs of the players at time t, $\sigma_{t}$, converges to some
profile $\hat{\sigma}$. If $\hat{\sigma}$ were not a Nash equilibrium, some player would eventually want
to deviate and the beliefs would also deviate since based on Proposition \ref{prop1} players eventually learn their opponents actions.
\end{proof}

Based on the propositions (\ref{prop1}-\ref{prop4}) we can show that EKF fictitious play converges to the the Nash equilibrium of games with a better reply path. A game with a better reply path  can be represented as a graph were its edges are the join actions of the game $s$ and there is a vertex that connects $s$ with $s'$ iff only one player $i$ can increasing his payoff by changing his action \citep{payton_young}. Potential games have a better reply path \citep{payton_young}. 
\begin{thm}
\label{theo2}
The EKF fictitious play process converges to the Nash equilibrium in games with a better reply path.  
\end{thm}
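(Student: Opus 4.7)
The plan is to combine the four preceding propositions with the graph-theoretic structure of better-reply paths to show that the EKF fictitious play trajectory must eventually be absorbed by a pure Nash equilibrium. The underlying picture is: simultaneous changes cannot persist (Proposition \ref{prop2}), so generically only one player moves at a time; a single-player move is a best response, which is an improving move, hence an edge in the better-reply graph; finite graphs with a better-reply path from every node must lead to a sink, and sinks are Nash equilibria (Proposition \ref{prop3}(b)); once at a strict Nash equilibrium the process stays there forever (Proposition \ref{prop3}(a)).

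First I would fix notation. Let the better-reply graph $G(\Gamma)$ have vertex set $S$ (joint pure actions) and a directed edge $s \to s'$ whenever $s$ and $s'$ differ in exactly one coordinate $i$ and $r^i(s') > r^i(s)$. By assumption, from every vertex there exists a directed path in $G(\Gamma)$ terminating at a pure Nash equilibrium (a sink). I would then consider the sequence $(s_t)_{t\ge 0}$ of joint actions actually played under EKF fictitious play. Using Proposition \ref{prop2} I would extract an (almost sure) subsequence of times at which the transition $s_t \to s_{t+1}$ involves a change by at most one player. At such times, either nobody moves (so $s_{t+1}=s_t$) or a single player $i$ switches via best response against the EKF-estimated mixed strategy $\sigma_t^{-i}$; by Proposition \ref{prop1} the estimates $\sigma_t^{-i}$ concentrate on the most recently observed opponent actions, so after enough consecutive repetitions of the current profile the best response is evaluated essentially against the pure profile $s_t^{-i}$, and a unilateral deviation occurs only if it strictly improves $r^i$. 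That deviation is therefore an edge of $G(\Gamma)$.

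Next I would combine this with the absorbing property. Because $S$ is finite and every single-player move along the realised trajectory is an improving move for the deviating player, the realised trajectory projects to a walk on $G(\Gamma)$ that strictly increases the potential (or, more generally, the deviator's payoff along each edge, which in a game with a better-reply path eventually reaches a sink, since no cycle of strictly improving unilateral deviations can exist in a finite state space). Hence after finitely many single-player changes the trajectory reaches a pure Nash equilibrium $\hat s$. By Proposition \ref{prop3}(a), $\hat s$ is absorbing, so $s_t=\hat s$ for all sufficiently large $t$. Finally, Proposition \ref{prop1} guarantees that the EKF beliefs then concentrate on $\hat s^{-i}$ for each $i$, so the mixed-strategy assessments $\sigma_t$ converge to the pure profile $\hat s$, and by Proposition \ref{prop4} this limit is a Nash equilibrium.

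The main obstacle is the step that links ``the players use EKF best-response against an estimated mixed strategy'' with ``the realised move is an edge in the better-reply graph''. A priori the EKF assessment is not a point mass, so a best response might be triggered by transient fluctuations rather than by a genuine improvement in the true payoff $r^i$. I would handle this by arguing, using the increment formulas in Proposition \ref{prop1} together with the choices $\Xi=(\tilde\xi+\epsilon)I$ and $Z=(1/t)I$, that if the profile $s$ is repeated for a sufficient number of consecutive stages then $\sigma_t^{-i}$ becomes arbitrarily close to the Dirac mass on $s^{-i}$, so that $\text{BR}^i(\sigma_t^{-i}) = \text{BR}^i(s^{-i})$; combined with Proposition \ref{prop2}, which rules out perpetual simultaneous switching, this forces every accepted deviation along the trajectory to be a strict better reply in the true game, thereby closing the argument. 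A secondary but minor issue is that Proposition \ref{prop2} holds ``with high probability'', so the conclusion should be stated as almost-sure convergence, which is standard for this family of stochastic fictitious-play results.
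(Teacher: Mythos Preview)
Your proposal is correct and follows essentially the same route as the paper: use Proposition~\ref{prop2} to rule out perpetual simultaneous switches, Proposition~\ref{prop1} to ensure that after enough repetitions the EKF assessments concentrate on the current profile so that each unilateral deviation is a genuine better-reply edge, and then traverse the finite better-reply graph until a sink is reached, where Proposition~\ref{prop3}(a) gives absorption. The paper's proof is considerably terser---it does not spell out the link between ``best response to the EKF estimate'' and ``edge in the better-reply graph'' that you correctly flag as the crux, nor does it discuss the high-probability versus almost-sure distinction---so your version is, if anything, a more careful rendering of the same argument.
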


\begin{proof}
 If the initial beliefs of the players are such that their initial joint action $s_{0}$ is a Nash equilibrium, from Proposition \ref{prop3} and equation (\ref{eq:strategies}), we know that they will play the joint action which is a Nash equilibrium for the rest of the game. 
  
Moreover,   the initial beliefs of the players are such that their initial joint action $s_{0}$ is not a Nash equilibrium based on Proposition \ref{prop1} and Proposition \ref{prop2} after a finite number of iterations because the game has a better reply path the only player that can improve his pay-off by changing his actions will choose a new action which will result in  a new joint action $s$. If this action is not the a Nash equilibrium then again after finite number of iterations the player who can improve his pay-off will change action and a new joint action $s'$ will be played. Thus after the search of the vertices of a finite graph, and therefore after a finite number of iterations, players will choose a joint action which is a Nash equilibrium. Eventually after a finite number of time steps, $T$, the process will end up in a pure Nash equilibrium. The maximum number of iterations that is needed is the cardinality of the joint action set multiplied with the total number of iterations that is needed in order not to have simultaneous changes, $\binom{n}{2}(t_1+t_2+t_3+\ldots+T)$
\end{proof}

\section {Parameters selection}
\label{parameters}
The covariance matrix of the state space error $\Xi=\tilde{\xi}I$ and the measurement error $Z=\tilde{\zeta}I$ are two parameters that we should define in the beginning of the EKF fictitious play algorithm and they affect its performance.  Our aim is to find values, or range of values, of $\tilde{\xi}$ and $\tilde{\zeta}$ that can efficiently track opponents' strategy when it smoothly or abruptly change, instead of choosing them heuristically for each opponent when we use the EKF algorithm. Nevertheless it is possible that for some games the results of the EKF algorithm will be improved for other combinations of $\tilde{\xi}$ and $\tilde{\zeta}$ than the ones that we propose in this section. 

We examine the impact of EKF fictitious play algorithm parameters in its performance, by measuring the mean square error (MSE) between EKf estimation and the real opponent strategy, in the following two tracking scenarios. In the former scenario a single opponent chooses his actions using a mixed strategy which changes smoothly and has a sinusoidal form over the iterations of the tracking scenario. In particular for $t=1, 2, \ldots, 5000$ iterations of the game: $\sigma_{t}(1)=\frac{cos\frac{{2\pi t}}{n}+1}{2}=1-\sigma_t(2)$, where $n=100$. In the second toy example Player $i$'s opponent change his strategy abruptly and chooses action 1 with probability $\sigma_{t}^{2}(1)=1$ during the first and the last 1250 iterations' of the game and for the rest iterations of the game $\sigma_{t}^{2}(1)=0$. The probability of the second action is calculated as: $\sigma_{t}^{2}(2)=1 - \sigma_{t}^{2}(1)$. 

Figure \ref{fig:contour1} depicts the  mean square error for both tracking scenarios for $0.005\leq\tilde{\xi}\leq0.5$ and $0.005\leq\tilde{\zeta}\leq0.5$. Additionally we examine the performance of EKF fictitious play for $\tilde{\zeta}=1/t$  which is depicted as the last element of the x-axis of Figure \ref{fig:contour1}. There is a wide range of combinations of $\tilde{\xi}$ and $\tilde{\zeta}$ that minimise the MSE where the mean for both examples. The MSE is minimised for a wider range of $\tilde{\xi}$ when $\tilde{\zeta}=1/t$ than the all the cases where a single value of $\tilde{\zeta}$  was used. Moreover the minimum value of the MSE was observed when $\tilde{\zeta}=\frac{1}{t}$ and $\tilde{\xi}=0.1$. Figures \ref{fig:toy1} and \ref{fig:toy2} depict the opponent's strategy tracking of EKF fictitious play in both examples for these parameters. 
\begin{figure}%
\begin{center}
	\includegraphics[scale=0.6]{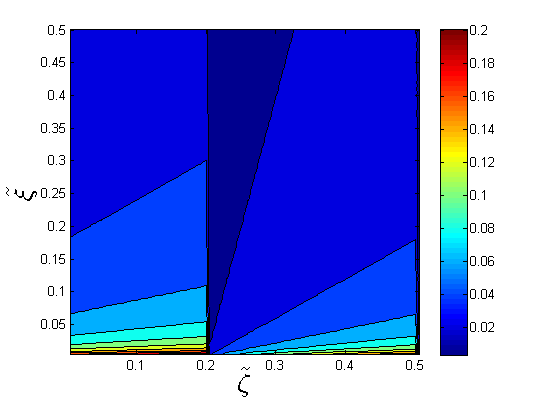}%
\caption{Contour plot of the combined mean square error for the pure and mixed strategy case. The range of $\xi$ and
$\zeta$ is \mbox{[0.005, 0.5] }. The last value of x-axis corresponds to $\tilde{\zeta}=1/t$ the }%
\label{fig:contour1}%
\end{center}
\end{figure}
\begin{figure}%
\begin{center}
	\includegraphics[scale=0.7]{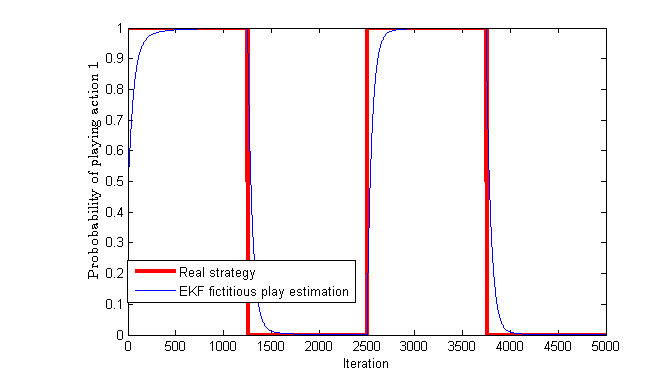}%
\caption{Tracking of opponent strategy at a pure strategy environment, where the pre-specified strategy of the opponent and its prediction are depicted as the red and blue line respectively.}%
\label{fig:toy1}%
\end{center}
\end{figure}
\begin{figure}%
\begin{center}
	\includegraphics[scale=0.7]{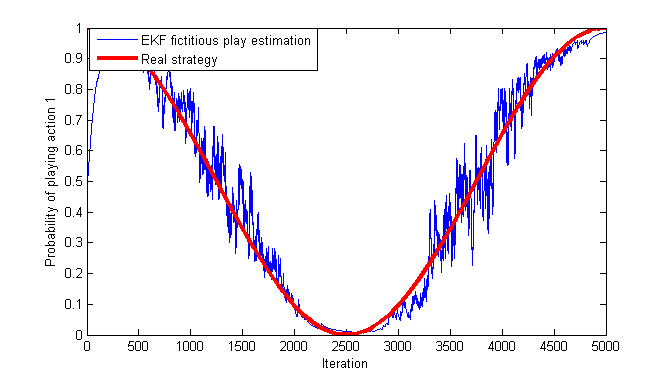}%
\caption{Tracking of opponent strategy at a mixed strategy environment, where the pre-specified strategy of the opponent and its prediction are depicted as the red and blue line respectively.}%
\label{fig:toy2}%
\end{center}
\end{figure}

\section{Simulations}
\label{simulation}
This section contains the results of two simulation scenarios, on which the performance of the proposed learning algorithm was tested. Based on the results of the previous section, we used the following values for $\Xi$ and $Z$ in our simulations,  $\Xi_t=(\tilde{\xi}+\epsilon)I$ and $Z_{t}=\tilde{\zeta}I$, where $I$ is the identity matrix, $\tilde{\xi}=0.1$, $\tilde{\zeta}=1/t$ and $\epsilon$ is an arbitrarily small Gaussian random number. For both simulation scenarios we report the average results for 100 replications of each learning instance. In each learning instance the agents were negotiating for 50 iterations.

\subsection{Results in symmetric games}

We initially tested the performance of the proposed algorithm in symmetric games  as they are presented in Tables \ref{tab:simcoord} and \ref{tab:3psym}. Even though these are games that involve only two players, they can be ensued from realistic applications. 
\begin{table}
\centering
\begin{tabular}{|c| c| c|c|}
\hline
 &Weak&Fair&Strong\\ \hline
 Weak& 1,1 & 0,0 & 0,0  \\ \hline
 Fair& 0,0 & 1,1 & 0,0 \\ \hline
 Strong& 0,0& 0,0 &1,1 \\ \hline 
 \end{tabular}
\caption{Players' rewards of a symmetric game.}
 \label{tab:3psym}
\end{table}
The game of Table \ref{tab:simcoord} can be seen as a collision avoidance game. Consider the case where  two UAVs, the players of the game, should fly in opposite directions. They will gain some reward if they accomplish their task and fly in different altitudes. Therefore players should coordinate and choose  one of the two joint actions that maximise their reward, $UL$ and $DR$, who represent the choices of different altitudes from the UAVs.  

Another example of how a symmetric game can be ensued from a realistic application comes from the area of material handling robots. Consider the case where two robots should coordinate in order to move some objects to a desired destination. The robots can either push or pull the objects depending on the direction of the destination of the object. Moreover, each robot can apply different forces to the objects. The amount of the force that a robot applies to an object can be represented as a fuzzy variable that can take the values: Weak, Fair and Strong. This scenario can lead to the game that is described in Table \ref{tab:3psym}.

Since in this paper we focus on the decision making module of a robot's controller, we will compare the proposed learning algorithm of EKF fictitious play with the particle filter alternative \cite{cj}. We will use  it as a baseline criterion for the greedy choice of the players, i.e the action that maximises the reward of a player independently of what the others are doing. In particular for the games that are presented in Tables \ref{tab:simcoord} and \ref{tab:3psym} this is the random action since the rewards for all the actions are the same.

Figures \ref{fig:res1} and \ref{fig:res2} depict the percentage of the replications where the algorithms converged to one of the optimal solutions for the games depicted   in Tables \ref{tab:simcoord} and \ref{tab:3psym} respectively. Because of the symmetry in the utility function, the particle filter fictitious play behaves similarly to the the random action. Thus when the initial beliefs were not supporting one of the pure Nash equilibria of the games, it failed to converge to joint action with reward. EKF fictitious play on the other hand, always converged to a joint action that maximises players rewards.
\begin{figure}%
\begin{center}
	\includegraphics[scale=0.65]{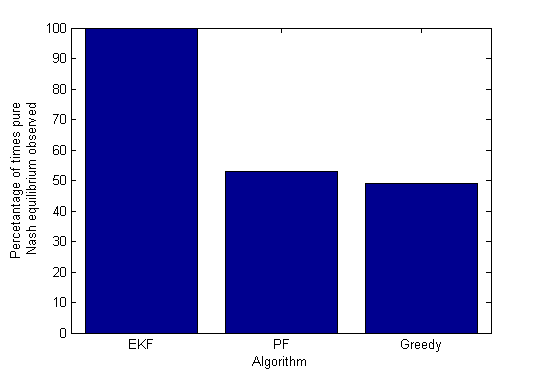}%
\caption{Percentage of the times that each algorithm converged to one of the two optimum joint actions for the game of Table \ref{tab:simcoord}.}%
\label{fig:res1}%
\end{center}
\end{figure}
\begin{figure}%
\begin{center}
	\includegraphics[scale=0.65]{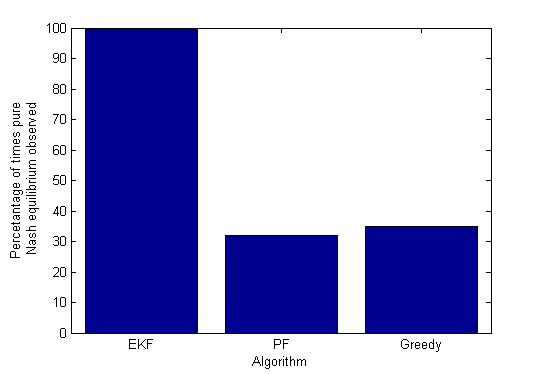}%
\caption{Percentage of the times that each algorithm converged to one of the three optimum joint actions for the game of Table\ref{tab:3psym}.}%
\label{fig:res2}%
\end{center}
\end{figure}
Figure \ref{fig:res3} shows the average utility that players obtain in each iteration of the game when the initial joint actions have zero reward. As we can observe, after the maximum of 35 iterations, the utility that players receive is 1 and therefore it converges to one of the pure Nash equilibria.
\begin{figure}%
\begin{center}
	\includegraphics[scale=0.6]{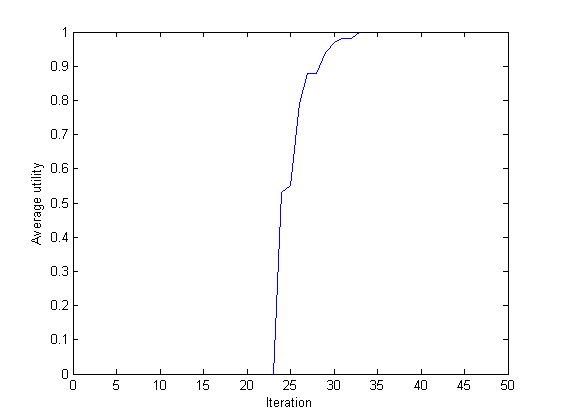}%
\caption{Average reward that players obtain when their initial action is either $UR$ or$DL$ for the game of Table \ref{tab:simcoord} .}%
\label{fig:res3}%
\end{center}
\end{figure}

\subsection{Results in a robots team  warehouse patrolling task}

We also examined the performance of EKF fictitious play in the task allocation scenario we described in Section \ref{sec:games}. A robot team, which consists of $\mathcal{I}$ robots, should examine an area $A$ that is divided in $\mathcal{N}$ regions with hazardous objects. We present simulation results for various combinations of number of regions and sizes of the robot's team. In particular, we examined the performance of EKF learning algorithm when the robot's team consists of $\mathcal{I}$ robots, $\mathcal{I}\in \{5,10,15,20,25,30,35,40\}$, who should patrol one of the $\mathcal{N}$ regions, $\mathcal{N}\in\{5,10,20\}$. 

Each region contains two objects with different attributes. Each robot $i$ then can be equipped with up to three of the following sensors: fire detector, chemical detector, Geiger counter and vision system. Therefore robots with a fire detector should patrol areas with flammable objects, robots with Geiger should patrol areas with radioactive material etc. Moreover each robot can move towards a region $n$ with a velocity $\mathcal{V}_{in}$. In our case study we assume that the velocity of a robot $i$, towards a region $n$ can be either slow, medium or fast. Therefore the time that a robot needs to reach a region $n$,  $T_{in}$, depends both in the distance between the robot and the region and the velocity that the robot will choose to move towards $n$ as well.  The global reward that is shared among the robots is defined as:

\begin{equation}
\begin{array}{r l}
r_g(s)=&- \sum_{n=1}^{\mathcal{N}}\sum_{i:s^{i}=n}c_{1}T_{in}+c_{2}\mathcal{V}_{in}\\
& +\sum_{n=1}^{\mathcal{N}}\sum_{k\in K_{n}} V_{nk}(1-\prod_{i:s^{i}=n}(1-\mathcal{E}_{ink}))
\end{array}
\label{eq:disut}
\end{equation}
where $c_{1}=1$, $c_{2}=1/6$ and $\mathcal{E}_{ink}$ is a metric of robot $i$'s efficiency to patrol a region $n$ based on its battery capacity and its sensor quality. It can also be seen as the probability that robot $i$ has to detect a threat $k$ in region $n$. We define $\mathcal{E}_{ink}$ as a function of robot $i$'s battery capacity and its sensors' quality as it is presented in Table \ref{tab:3prob}. If a robot has no suitable sensor to patrol a specific area $\mathcal{E}_{ink}=0$. 
\begin{table}
\centering
\begin{tabular}{|c| c| c c c|}
\hline 
\hline
 & & \multicolumn{3}{c|}{Battery capacity}\\ \cline{3-5}
 & & Short &Fair&Long \\ \cline{2-5}
 \multirow{3}{*}{Sensors' quality}&Low & 0.2 & 0.3&0.5  \\ \cline{3-5}
 &Medium& 0.3 & 0.5 & 07 \\ \cline{3-5}
 &High& 0.5& 0.7 &0.9 \\ \hline
 \end{tabular}
\caption{Efficiency that a robot efficiently patrol the area for a specific threat.}
 \label{tab:3prob}
\end{table}
Note that the first term in he summation represents the control cost the robots have when they choose a specific region to allocate and the second term corresponds to the environmental cost.

Even though this article considers distributed optimisation tasks, a centralised alternative, genetic algorithm (GA), will be also used. Since the centralised optimisation algorithms are expected to perform equally good or better than distributed optimisation algorithms, the performance of GA can be seen as a benchmark solution. Thus a central decision maker will choose in advance the area that each robot will patrol and the robot completes the pre-optimised allocation task. In our comparison we used the genetic algorithm function of Matlab's optimisation toolbox. Since genetic algorithms are stochastic processes, in each repetition of the same replication of the task allocation problem it will converge to a different solution. Thus we used two instances of the genetic algorithm, in the former one we used a single repetition of the genetic algorithm per replication of our problem, in the latter one for each replication of the task allocation problem we chose the allocation of the maximum utility among 50 runs of the genetic algorithm. 

In order to be able to average across the 100 replications, we used the following scores for the global rewards players obtained for each learning algorithm they used: $\mathcal{F}=100 \times \frac{r_{g}(s)}{r_{max}}$, where $r_{max}$ is the reward that players would have obtained if all all areas had been patrolled with 100\% efficiency.

Figures \ref{fig:res4}, \ref{fig:res5} and \ref{fig:res6} depict the score players obtain in the final iteration of the algorithm as a function of the number of robots, $\mathcal{I}$, for the tested algorithms when there are 5,10 and 20 areas of interest respectively.

\begin{figure}%
\begin{center}
	\includegraphics[scale=0.8]{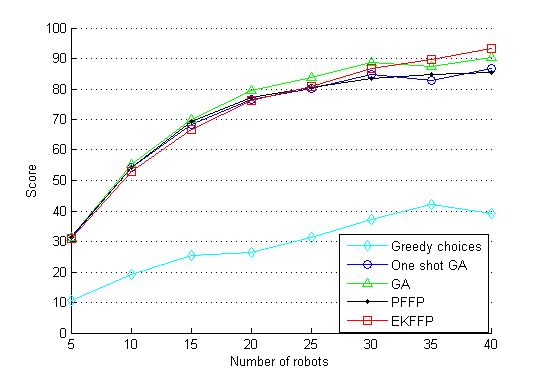}%
\caption{Robot team task allocation outcome for the case where $\mathcal{N}=5$. The x-axis represent the number of robots that was used in the simulation scenario and the y-axis the corresponding average score of the 100 replications of the allocation task. }%
\label{fig:res4}%
\end{center}
\end{figure}

\begin{figure}%
\begin{center}
	\includegraphics[scale=0.8]{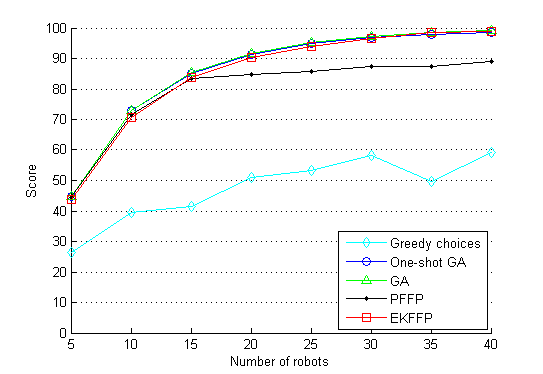}%
\caption{Robot team task allocation outcome for the case where $\mathcal{N}=10$. The x-axis represent the number of robots that was used in the simulation scenario and the y-axis the corresponding average score of the 100 replications of the allocation task.}%
\label{fig:res5}%
\end{center}
\end{figure}

\begin{figure}%
\begin{center}
	\includegraphics[scale=0.8]{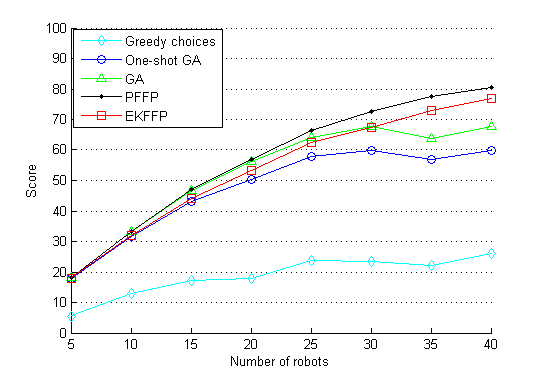}%
\caption{Robot team task allocation outcome for the case where $\mathcal{N}=20$. The x-axis represent the number of robots that was used in the simulation scenario and the y-axis the corresponding average score of the 100 replications of the allocation task.}%
\label{fig:res6}%
\end{center}
\end{figure}

In all cases when players chose the greedy actions, the outcome was very poor, which also indicates the need of the coordination controller that we propose when distributed solutions are considered. When the case with 5 areas is considered, Figure \ref{fig:res4}, all algorithms performed similarly, when $\mathcal{I}\leq 30$. When $\mathcal{I}\geq 35$ the best performance was observed when the proposed algorithm was used. When the searching areas were increased to 10,  Figure \ref{fig:res5}, the EKF algorithm and the variants of the genetic algorithm have similar performance independently of the number of robots. But its performance is better than the particle filters algorithm when $\mathcal{I} \geq 20$. Finally when we consider the case with 20 search areas EKF fictitious play perform significantly better than both variants of the genetic algorithms we used. But its score is 3\% smaller than the one of particle filter alternative when we consider the case where 40 robots had to coordinate. 

We also test the computational cost, in seconds of the 50 iterations per agent of the algorithms we tested. For the genetic algorithms, since they are centralised algorithms, the cost per robot is the cost the algorithm needed to terminate. In order to have comparable results with the centralised cases, for the two distributed algorithms it is assumed that robots are able to communicate. Therefore, before they take an action there is a ``negotiation'' period among the robots. During this  period the robots exchange messages indicating the areas, which they intent to patrol. Then, when the final iteration of the negotiation period is reached, they execute the last joint action. Under this process the difference in the computational time of each robot is the total computational time of each algorithm. The experiments were performed on a PC with Intel i7 processor at 2.5 GHz processors and 8 GB memory. 

Even though the performance of the proposed algorithm is not always improving the results of the existing algorithms we tested, it has the smallest computational cost among them, as it is depicted in Tables \ref{tab:time1},\ref{tab:time2} and \ref{tab:time3}. This is important in robotics applications where restrictions, such as battery life, should be taken into account.
\begin{table}
\centering
\begin{tabular}{|c| c| c|c|c|}
\hline
 &One-shot Genetic algorithm &Genetic algorithm&Particle filter fictitious play &EKF fictitious play\\ \hline
 5&  1.097 & 56.9  & 0.185 & 0.159\\ \hline
 10& 2.081 & 105.9 & 0.399 & 0.165\\ \hline
 15& 2.255 & 110.6 & 0.616 & 0.246\\ \hline
 20& 2.358 & 117.5 & 0.886 & 0.326\\ \hline
 25& 2.366 & 121.5 & 1.089 & 0.428\\ \hline
 30& 2.698 & 159.9 & 1.374 & 0.503\\ \hline
 35& 2.728 & 132.1 & 1.558 & 0.589\\ \hline
 40& 2.839 & 136.7 & 1.917 & 0.660\\ \hline
 \end{tabular}
\caption{Average computational time that each robot was needed per replication, 5 areas }
 \label{tab:time1}
\end{table}
\begin{table}
\centering
\begin{tabular}{|c| c| c|c|c|}
\hline
 &One-shot Genetic algorithm &Genetic algorithm&Particle filter fictitious play &EKF fictitious play\\ \hline
 5 & 2.056& 106.9 & 0.628& 0.674\\ \hline
 10& 4.353& 190.5 & 1.165& 0.705\\ \hline
 15& 4.318& 199.85& 1.881& 1.105\\ \hline
 20& 3.296& 240.25& 2.579& 1.704\\ \hline
 25& 4.313& 246.75& 3.208& 2.119\\ \hline
 30& 5.092& 248.5 & 3.762& 3.143\\ \hline
 35& 5.275& 255.5 & 4.445& 3.899\\ \hline
 40& 6.023& 296.0 & 5.271& 4.487\\ \hline
 \end{tabular}
\caption{Average computational time that each robot was needed per replication, 10 areas }
 \label{tab:time2}
\end{table}
\begin{table}
\centering
\begin{tabular}{|c| c| c|c|c|}
\hline
 &One-shot Genetic algorithm &Genetic algorithm&Particle filter fictitious play &EKF fictitious play\\ \hline
 5&  3.455 & 136.9 & 0.826 & 0.706\\ \hline
 10& 12.023 & 556.9 & 1.906 & 1.105\\ \hline
 15& 20.141 & 749.2 & 2.932 & 1.704\\ \hline
 20& 20.681 & 870.8 & 3.276 & 2.119\\ \hline
 25& 22.014 & 875.1 & 4.955 & 2.229\\ \hline
 30& 22.211 & 909.8 & 6.138 & 3.143\\ \hline
 35& 26.146& 911.6 & 6.754 & 3.899\\ \hline
 40& 27.276 & 944.5 & 8.399 & 4.487\\ \hline
 \end{tabular}
\caption{Average computational time that each robot was needed per replication, 20 areas }
 \label{tab:time3}
\end{table}

\subsection{Transportation of a heavy object through corridors.}
In this section we describe how the proposed decision making module can be implemented as a coordination mechanism between two robots which carry a heavy item through a set of possible unmapped corridors. Based on a game-theoretic formulation, after the robots sense that an object or wall block their way, they have to coordinate in order to choose the action that will allow them to continue their mission even when they are not aware of the environment's structure. We assume that the two robots can move in  corridors using the moves that are depicted in Figure \ref{fig:moves}.  In order to take into account obstacles and varying widths of corridors we allowed two possible ways to turn left or right. In particular Action 1 represents the move forward, Action 2 represents the move forward for a distance $z$ meters and right $z$ meters, Action 3 is a $z$ meter diagonal right move of 45 degrees slope, Action 4 represents the move forward for $z$ meters and left $z$ meters and Action 5 is a $z$ meter diagonal left move of 45 degrees slope. This also introduces the need of a coordination mechanism because there can be cases, such as the set of corridors which are depicted in Figure \ref{fig:cor}, were both moves can be used by the robots to change direction. 

\begin{figure}
\centering
\includegraphics[scale=0.70]{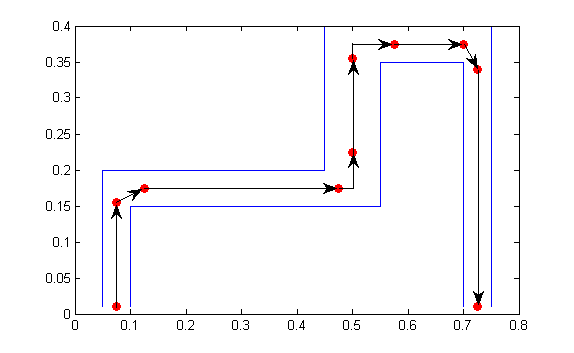}
\caption{The configuration the corridor used in our simulation in a MATLAB$^{TM}$ figure and illustration of robot moves.}
\label{fig:cor}
\end{figure}
\begin{figure}[h!]
\centering
\includegraphics[scale=0.70]{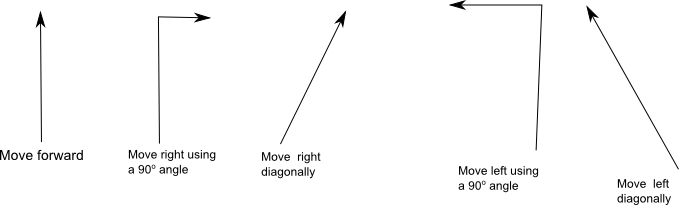}
\caption{Available moves to each robot}
\label{fig:moves}
\end{figure}

Robots in this scenario have to make decisions in various areas of Figure \ref{fig:cor}. Each of these areas is assumed to be a checkpoint, at which when the robots arrive, they use the EKF algorithm in order to choose the joint action that will lead them in the next checkpoint. The checkpoints can either be predefined before the task starts, or it can be dynamically generated based on set time intervals or the distance that the robots have covered.
In this game when the robots fail to coordinate, they receive zero utility, and when they coordinate the reward their gain is a constant $\mathfrak{c}$ if the joint action is feasible and zero otherwise. Thus in each checkpoint the robots play a game with rewards defined as:
\begin{equation}
r(s)= \left\lbrace
\begin{array}{ll}
\mathfrak{c}& \textrm{ if the joint action $s$ is feasible and $s^{1}=s^{2}$} \\
0 &\textrm{ otherwise}
\end{array}\right.
\end{equation}
A joint action $s$ is considered feasible when it is executed and the task can proceed, i.e. during the execution of the action neither the robots nor the  hit a wall.


We present the results over 100 replications of the above coordination task. In this task, because of the symmetry of the game, when the robots use  the particle filter variant of fictitious play, fail to coordinate the majority of the times. In particular only in 34\% of the replications of the coordination tasks were successfully completed when the particle filters algorithm was used. On the other hand, when the proposed algorithm was used, the task was completed in all replications.

Figure \ref{fig:dyncheck} displays a sample trajectory of the robots for the dynamically generated checkpoints. In this simulation scenario we assume that 0.1 distance units corresponds to 10 meters.  The checkpoints are generated when robots move 5 meters from the previous checkpoint, i.e. $z=5$. This can be seen also as a new checkpoint every 2 seconds when the robots are moving with speed of 2.5m/s.


\begin{figure}
\centering
\includegraphics[scale=0.70]{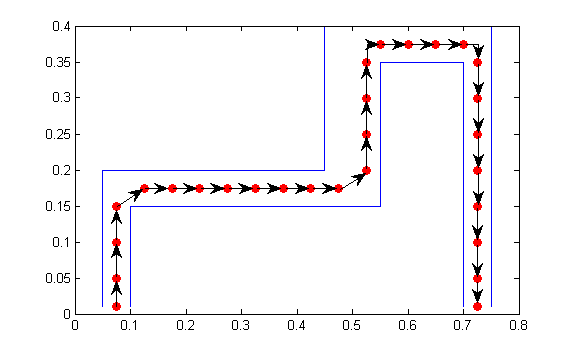}
\caption{Trajectory of the robots through the checkpoints.} 
\label{fig:dyncheck}
\end{figure}


Table \ref{tab:resdynam} shows the joint actions who can produce some reward (successful joint action) for each check point and the average number of iterations the EKF fictitious play algorithm needed to converge to a solution. 



\begin{table}
\centering
\begin{tabular}{|c| c| c|}
\hline
 &Successful joint actions&Iterations \\ \hline
 Checkpoint 1&(Action 1,Action 1)&1 \\ \hline
 Checkpoint 2&(Action 1,Action 1)&1 \\ \hline 
 Checkpoint 3&(Action 1,Action 1)&1  \\ \hline 
 Checkpoint 4&(Action 2,Action 2),(Action 3,Action 3)&25.6  \\ \hline 
 Checkpoint 5&(Action 1,Action 1)&1 \\ \hline
 Checkpoint 6&(Action 1,Action 1)&1 \\ \hline
 Checkpoint 7&(Action 1,Action 1)&1  \\ \hline
 Checkpoint 8&(Action 1,Action 1)&1  \\ \hline 
 Checkpoint 9&(Action 1,Action 1)&1  \\ \hline 
 Checkpoint 10&(Action 1,Action 1) &1  \\ \hline
 Checkpoint 11&(Action 1,Action 1)&1  \\ \hline
 Checkpoint 12&(Action 4,Action 4),(Action 5,Action 5)&29.8 \\ \hline
 Checkpoint 13&(Action 1,Action 1)&1 \\ \hline
 Checkpoint 14&(Action 1,Action 1) &1  \\ \hline 
 Checkpoint 15&(Action 1,Action 1)&1  \\ \hline 
 Checkpoint 16&(Action 1,Action 1)&1 \\ \hline
 Checkpoint 17&(Action 2,Action 2),(Action 3,Action 3)&22.6  \\ \hline
 Checkpoint 18&(Action 1,Action 1)&1  \\ \hline
 Checkpoint 19&(Action 1,Action 1)&1  \\ \hline
 Checkpoint 20&(Action 1,Action 1)&1  \\ \hline 
 Checkpoint 21&(Action 2,Action 2),(Action 3,Action 3)& 27.9 \\ \hline
 Checkpoint 22&(Action 1,Action 1)&1  \\ \hline 
 Checkpoint 23&(Action 1,Action 1)&1  \\ \hline
 Checkpoint 24&(Action 1,Action 1)&1  \\ \hline
 Checkpoint 25&(Action 1,Action 1)&1  \\ \hline
 Checkpoint 26&(Action 1,Action 1)&1  \\ \hline
 Checkpoint 27&(Action 1,Action 1)&1  \\ \hline
 \end{tabular}
\caption{Average number of iterations that was needed by EKF fictitious play algorithm in order for the robots to reach consensus.}
 \label{tab:resdynam}
\end{table}

\subsection{Ad-hoc sensor network}

The simulation scenario of this section is based on a coordination task of a power constrained sensor network, where sensors can be either in a sensinf or sleeping mode \citep{faraneli}. When the sensors are in a sensing mode, they can observe the events that occur in their range. During their sleeping mode the sensors harvest the energy they need in order to be able function when they are in the sensing mode. The sensors then should coordinate and choose their sensing/sleeping schedule in order to maximise the coverage of the events. This optimisation task can be cast as a potential game. In particular we consider the case where $\mathcal{I}$ sensors are deployed in an area where $E$ events occur. If an event $e$, $e \in E$, is observed from the sensors, then it will produce some utility $V_e$. Each of the sensors $i=1,\ldots,\mathcal{I}$ should choose an action $s^{i}=j$, from one of the $j=1,\ldots,J$ time intervals which they can be in sensing mode. Each sensor $i$, when it is in sensing mode, can observe an event $e$, if it is in its sensing range, with probability $p_{ie}=\frac{1}{d_{ie}}$, where $d_{ie}$ is the distance between the sensor $i$ and the event $e$. We assume that the probability that each sensor  observes an event is independent from the other sensors' probabilities. If we denote by $i_{in}$ the sensors that are in sense mode when the event $e$ occurs and $e$ is in their sensing range, then we can write the probability an event $e$ to be observed from the sensors, $i_{in}$ as:
\begin{equation}
1- \prod_{i \in i_{in}}{(1-p_{ie})}
\end{equation}
The expected utility that is produced from the event $e$ is the product of its utility $V_e$ and
the probability it has to be observed by the sensors, $i_{in}$ that are in sensing mode when the event $e$ occurs and $e$ is in their sensing range. More formally, we can express the utility that is produced from an event $e$ as:
\begin{equation}
r_{e}(s)=V_{e}(1- \prod_{i \in i_{in}}{(1-p_{ie}}))
\label{eq:target_utility}
\end{equation}
The global utility is then the sum of the utilities that all events, $e \in E$, produce
\begin{equation}
r_{global}(s)= \sum_{e}{r_{e}(s)}. \label{eq:global_utility}
\end{equation}
Each sensor, after each iteration of the game, receives some utility, which is based on the sensors and the events that are inside its communication and sensing range respectively. For a sensor $i$ we denote $\tilde{e}$ the events that are in its sensing range and $\tilde{s}^{-i}$ the joint action of the sensors that are inside his communication range. The utility that sensor $i$ will receive if his sense mode is $j$ will be 
\begin{equation}
r_{i}(s^{i}=j,\tilde{s}^{-i})= \sum_{\tilde{e}}{r_{\tilde{e}}(s^{i}=j,\tilde{s}^{-i})} \label{eq:individual_utility}.
\end{equation}

In this simulation scenario 40 sensors are placed on a unite square, each have to choose one time interval of the day when they will be in sensing mode and use the rest time intervals to harvest energy. We consider cases where sensors had to choose their sensing mode among 4 available time intervals. Sensors are able to communicate with other sensors that are at most 0.6 distance units away, and can only observe events that are at most 0.3 distance units away. The case where 20 events are taking place is considered. Those events were uniformly distributed in space and time, so an event could evenly appear in any point of the unit square area and it could occur at any time with the same probability. The duration of each event was uniformly chosen between $(0-6]$ hours and each event had a value $V_{e} \in (0-1]$. 

The results of EKF fictitious play are compared with the ones of the particle filter based algorithm. The number of particles in particle filter fictitious play algorithm, play an important role in the computational time of the algorithm. For that reason comparisons were made with three variants of the particle filter algorithm with 500-1000 particles.

The results presented in Figure \ref{fig:sens} and Table \ref {tab:sens} are the averaged reward of 100 instances of the above described scenario. Figure \ref{fig:sens} depicts the the average global reward. The global reward in each instance was normalised using the maximum reward which is achieved when all the sensors are in sense mode. As it is depicted in Figure  \ref{fig:sens} the average performance of the particle filter algorithm is not affected by the number of particles in this particular example. But the EKF fictitious play performed better than the particle filters alternatives. In addition, as it is shown in Table \ref{tab:sens}, the computational cost of the two particle filter's variants is greater than the EKF fictitious play algorithm. 

\begin{figure}
\centering
\includegraphics[scale=0.65]{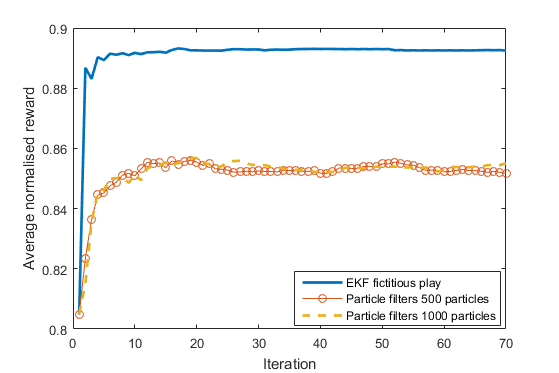}
\caption{Average reward as a function of time of the three learning algorithms. The x-axis represents the Iteration and the y-axis the average normalised reward.}
\label{fig:sens}
\end{figure}

\begin{table}
\centering 
\begin{tabular}{|c|c|c|c|}
\hline
&EKF fictitious play& Particle filter fictitious play 500 particles &Particle filter 1000 particles \\
\hline
Average time& 0.33&5.37 &8.43 \\
\hline
\end{tabular}
\caption{Average computational time for each agent in seconds.}
\label{tab:sens}
\end{table}

\section{Conclusions}
A novel variant of fictitious play has been presented. This variant is based on extended Kalman filters and can be used as a decision making module of a robot's controller. The relation between the proposed learning process and the BDI framework was also considered. The main theoretical result has been,  that the learning algorithm associated with the cooperative game converges to the Nash equilibrium of potential games.  The application of this methodology has been  illustrated on four examples.  The filtering methods used for learning have also been compared  for their performance and computational costs.  The methods presented can potentially find widespread applications in the robotics industry. Possible improvements could be obtained by the adoption of more sophisticated reward/cost functions without altering the essential features of the methodology. 

\appendix
\section{Proof of Proposition 1}
\label{append1}

\subsection{Case where players have two available actions}
In the case where players have only 2 available actions Player $i$'s predictions about his opponent's propensity are:

\begin{equation}
\bar{Q}_{t}= \left(
\begin{array}{c}
\hat{Q}_{t-1}[1] \\
\hat{Q}_{t-1}[2] \\
\end{array} \right)
\end{equation}

\begin{equation}
P_{t}^{-} = \left(
\begin{array}{cc}
P^{-}_{t-1}[1,1] &P^{-}_{t-1}[1,2] \\
P^{-}_{t-1}[2,1]& P^{-}_{t-1}[2,2]  \\
\end{array} \right)+ qI
\end{equation}
without loss of generality we can assume that his opponent in iteration $t$ chooses action 2. Then the update step will be :
\begin{equation}
 v_t=z_{t}-h(\bar{Q}_{t})
\end{equation}
since Players $i$'s opponent played action 2 and $h_{s(k)}(Q)=\frac{exp(Q_{t}[s(k)]/\tau)}{\sum_{\tilde{s} \in S} exp(Q_{t}[\tilde{s}]/\tau)}$ we can write $v_{t}$ and $H_{t}(\bar{Q}_t)$ as:
\begin{eqnarray}
v_{t} = &\left(
\begin{array}{c}
0 \\
1 \\
\end{array} \right)-
\left(
\begin{array}{c}
\sigma_{t-1}(1) \\
1- \sigma_{t-1}(1) \\
\end{array} \right) \nonumber \\
 = &\left(
\begin{array}{c}
- \sigma_{t-1}(1) \\
\sigma_{t-1}(1) \\
\end{array}  \nonumber \right)
\end{eqnarray}

\begin{equation}
H_{t}(\bar{Q}_{t}) = \left(
\begin{array}{cc}
a_{t} & -a_{t} \\
-a_{t}& a_{t}  \\
\end{array} \right)
\end{equation}
where $a_{t}$ is defined $a_{t}=\sigma_{t-1}(1)\sigma_{t-1}(2)$. The estimation of \mbox{$S_{t}=H(\bar{Q}_{t})P_{t}^{-}H^{T}(\bar{Q}_{t})+Z$} will be: 
\begin{equation}
S_{t} = a^{2}\left(
\begin{array}{cc}
b & -b \\
-b& b 
\end{array} \right)+Z
\end{equation}
where $b=P^{-}_{t}[1,1]+P^{-}_{t}[2,2]-2P^{-}_{t}[1,2]$. The Kalaman gain, $K_{t}=P_{t}^{-}H^{T}(\bar{Q}_{t})S_{t}^{-1}$ can be written as 

\begin{equation}
K_{t}= \frac{1}{2r b+r^2}\left(
\begin{array}{cc}
P_{t}^{-}[1,1] & k \\
k& P_{t}^{-}[2,2]  \\

\end{array} \right) \left(
\begin{array}{cc}
a_{t}& -a_{t} \\
-a_{t}&a_{t}  \\

\end{array} \right)\left(
\begin{array}{cc}
b+r & b\\
b& b+r \\
\end{array} \right)
\end{equation}
up to a multiplicative constant we can write
\begin{equation}
K_{1} \sim \left(
\begin{array}{cc}
c & -c\\
-d& d \\

\end{array} \right)
\end{equation}
where $c=P_{t}^{-}[1,1]-P_{t}^{-}[1,2]$ and $d=P_{t}^{-}[2,2]-P_{t}^{-}[1,2]$.  The EKF estimate of the opponent's propensity is:  
\begin{eqnarray}
\label{eq:mupd}
\hat{Q}_{t} = &\left(
\begin{array}{c}
\hat{Q}_{t}[1] \\
\hat{Q}_{t}[2] \\
\end{array} \right) 
 =\left(
\begin{array}{c}
\bar{Q}_{t}[1]-2\sigma(1)\frac{a(b-k)}{4a^{2}(b-k)+(r+\epsilon)}\\
\bar{Q}_{t}[2]+2\sigma(1)\frac{a(b-k)}{4a^{2}(b-k)+(r+\epsilon)}\\
\end{array} \right)
\end{eqnarray}
Based on the above we observe that $\hat{Q}_{t}(1)<\hat{Q}_{t-1}(1)$ and $\hat{Q}_{t}(2)>\hat{Q}_{t-1}(2)$.

\subsection{Case where players have more than two available actions}

In the case where there are more than 2 available actions, when $1/t<<1$ then the covariance matrix $S_{t} \simeq H(\bar{Q}_{t})P_{t}^{-}H^{T}(\bar{Q}_{t})$ and then $K_{t} \simeq H(\bar{Q}_{t})$. From the definition of $H(\bar{Q}_{t}$ we know that it is invertible, with positive diagonal elements  and negative off diagonal elements. In particular we can write :

\begin{equation*}
H(\bar{Q}_{t})[i,i]=\sum_{j \in S /i} \sigma_{t-1}(s^{i})\sigma_{t-1}(s^{j})
\label{eq:hii}
\end{equation*}

\begin{equation*}
H(\bar{Q}_{t})[i,j]=-\sigma_{t-1}(s^{i})\sigma_{t-1}(s^{j}).
\label{eq:hij}
\end{equation*}

Suppose that action $s^{j}$ is played from Player $i$'s opponent then the update of $\hat{Q}_{t}[j]$ is the following:

\begin{equation*}
\hat{Q}_{t}[j]=\bar{Q}_{t}+H(\bar{Q}_{t})[j,\cdot]y.
\label{eq:mtup}
\end{equation*}
The only positive element of $y$ is $y[j]$. The multiplication of $H(\bar{Q}_{t})[j,\cdot]$ and $y$ is the sum of $\mathbb{I}$ positive coefficients and therefore the value of $\hat{Q}_{t}[j]$ will be increased. In order to complete the proof we should show that $\Delta[j] > \Delta[i] \forall i \in S /i$, where $\Delta[\tilde{i}]=\hat{Q}_{t}[\tilde{i}]-\hat{Q}_{t-1}[\tilde{i}], \tilde{i} \in S$.  For simplicity of notation for the rest of the proof we will write $H[i,j]$ instead of $H(\bar{Q}_{t})[i,j]$ and $\sigma(i)$ instead of $\sigma_{t-1}(s^{i})$. 

\begin{equation}
\begin{array}{r l}
\Delta[i]=& H[i,\cdot]y \\
=&H[i,1]\sigma(1)+H[i,2]\sigma(2)+\ldots+\\
&H[i,i-1]\sigma(i-1)-H[i,i]\sigma(i)+\\
&H[i,i+1]\sigma[i+1]+\ldots+\\
&H[i,j-1]\sigma(j-1)-H[i,j](1-\sigma(j))+\\
&H[i,j+1]\sigma(j+1)+\ldots+H[i,\mathcal{I}]\sigma(\mathcal{I})
\end{array}
\label{eq:deltai}
\end{equation}

\begin{equation}
\begin{array}{r l}
\Delta[j]=& H[j,\cdot]y \\
=&H[j,1]\sigma(1)+H[j,2]\sigma(2)+\ldots+\\
&H[j,j](1-\sigma(j))+\ldots+H[i,\mathcal{I}]\sigma(\mathcal{I})
\end{array}
\label{eq:deltaj}
\end{equation}

\begin{equation}
\begin{array}{r l}
\Delta[i]-\Delta[j]=& \sigma(1)(H[i,1]-H[j,1])+ \ldots+\\
&\sigma(i-1)(H[i,i-1]-H[j,i-1])-\\
&\sigma(i)(H[i,i]+H[j,i])+\\
&\sigma(i+1)(H[i,i+1]-H[i,i+1])+ \ldots+\\
&\sigma(j-1)(H[i,j-1]-H[j,j-1]) -\\
&(1-\sigma(j))(H[[i,j]+H[j,j]])+\\
&\sigma(j+1)(H[i,j+1]-H[j,j+1]) +\ldots+\\
&\sigma(\mathcal{I})(H[i,\mathcal{I}]-H[j,\mathcal{I}])	
\end{array}
\label{eq:difdel}
\end{equation}
If we substitute $H[\cdot,\cdot]$ with its equivalent and we can write $\Delta[i]-\Delta[j]$ as:
\begin{equation}
\begin{array}{r l}
\Delta[i]-\Delta[j]=& (\sigma(1))^{2}(\sigma(i)-\sigma(j))+ \ldots+\\
&(\sigma(i-1))^{2}(\sigma(i)-\sigma(j))-\\
&(\sigma(i))^2((\sum_{\tilde{j} \in S /i} \sigma(\tilde{j}))+\sigma(j))+\\
&(\sigma(i+1))^2(\sigma(i)-\sigma(j))+ \ldots+\\
&(\sigma(j-1))^2(\sigma(i)-\sigma(j))-\\
&(1-\sigma(j))\sigma(j)((\sum_{\tilde{j} \in S /j} \sigma(\tilde{j}))+\sigma(i))+\\
&(\sigma(j+1))^{2}(\sigma(i)-\sigma(j)) +\ldots+\\
&(\sigma(\mathcal{I}))^{2}(\sigma(i)-\sigma(j))	
\end{array}
\label{eq:difdel1}
\end{equation}
solving the inequality $\Delta[i]-\Delta[j]<0$ we obtain:
\begin{equation}
\begin{array}{l}
\Delta[i]-\Delta[j]<0 \Leftrightarrow\\
(\sigma(i)-\sigma(j))((\sum_{\tilde{j} \in S /\{i,j\}} (\sigma(\tilde{j})^{2})))<\\
((\sigma(i))^{2}((\sum_{\tilde{j} \in S /i} \sigma(\tilde{j}))+\sigma(j))+\\
(1-\sigma(j))(\sigma(j))((\sum_{\tilde{j} \in S /j} \sigma(\tilde{j}))+\sigma(i))	)
\end{array}
\label{eq:ineq1}
\end{equation}
In the case where $\sigma(i)<\sigma(j)$ the inequality is satisfied always because the left hand side of the inequality is always negative and the right hand side is always positive. In the case where $\sigma(i)>\sigma(j)$ inequality (\ref{eq:ineq1}) we will show by contradiction that (\ref{eq:ineq1}) is satisfied $\forall i, i \neq j$. Therefore we assume that:
\begin{equation}
\begin{array}{l}
\Delta[i]-\Delta[j]\geq 0 \Leftrightarrow\\
(\sigma(i)-\sigma(j))((\sum_{\tilde{j} \in S /\{i,j\}} (\sigma(\tilde{j})^{2})))\geq\\
((\sigma(i))^{2}+(1-\sigma(j))(\sigma(j)))((\sum_{\tilde{j} \in S /i} \sigma(\tilde{j}))+\sigma(j))	
\end{array}
\label{eq:ineq3}
\end{equation}
Since $((\sum_{\tilde{j} \in S /\{i,j\}} (\sigma(\tilde{j})^{2})))<((\sum_{\tilde{j} \in S /i} \sigma(\tilde{j}))+\sigma(j))$ in order to complete the proof we only need to show that:

\begin{equation}
\begin{array}{r l}
\sigma(i)-\sigma(j)\geq&(\sigma(i))^{2}+(1-\sigma(j))\sigma(j)	\\
\sigma(i)-(\sigma(i))^{2}\geq&2\sigma(j)-(\sigma(j))^2 \\
\sigma(i)(1-\sigma(i))\geq&\sigma(j)(2-\sigma(j))
\end{array}
\label{eq:ineq4}
\end{equation}
Inequality (\ref{eq:ineq4}) will be satisfied if the following inequality is satisfied:
\begin{equation}
\begin{array}{r l}
(1-\sigma(i))\geq& \frac{\sigma(j)}{\sigma(i)}(2-\sigma(j))
\end{array}
\label{eq:ineq5}
\end{equation}
Inequality (\ref{eq:ineq5}) will be satisfied if 
\begin{equation}
\begin{array}{r l}
(1-\sigma(i))\geq& (2-\sigma(j))\\
\sigma(j)-\sigma(i)\geq1
\end{array}
\label{eq:ineq6}
\end{equation}
since $\sigma(i)>\sigma(j)$, (\ref{eq:ineq6}), is false $\forall i$ and thus by contradiction $\Delta[i]-\Delta[j] <0$, which completes the proof.

\section{Proof of Proposition 2}
\label{append2}
Similarly to the proof of Proposition \ref{prop1} we consider only one opponent and a game with more than one pure Nash equilibria. We assume that a joint action $s=(s^{1}(j'),s^{2}(j))$ is played which is not a Nash equilibrium. Since players use EKF fictitious play because of Proposition \ref{prop1} they will eventually change their action to $\tilde{s}=(s^{1}(\tilde{j'}]), s^{2}(\tilde{j}))$ which will be the  best response to action $s^{2}(j)$ and $s^{1}(j')$ for players 1 and 2 respectively. But if this change is simultaneous their is no guarantee that the resulted joint action $\tilde{s}$ will increase the expected reward of the players. We will show that with high probability the two players will not change actions simultaneously infinitely often. 

Without loss of generality we assume that at time $t$ Player 2 change his action from $s^{2}(j)$ to $s^{2}(\tilde{j})$. We want to show that the probability that Player 1 will change his action to $s^{1}(\tilde{j'})$ with probability less than 1. Player 1 will change his action if he thinks that Player 2 will play action $s^2(\tilde{j})$ with high probability such that his utility will be maximised when he plays action $s^{1}(\tilde{j'})$. Therefore Player 1 will change his action to $s^{1}(\tilde{j'})$ if $\sigma^{2}(s^{2}(j))>\lambda$, where $\sigma^{2}(\cdot)$ is the estimation of his Player 1 about Player 2's strategy. Thus we want to show 

\begin{equation}
\begin{array}{l}
p(\sigma^{2}>\lambda)<1 \Leftrightarrow \\
p(\frac{\exp(\hat{Q}_{t-1}[j])}{\sum_{j'' \in S^{2}}{\exp(\hat{Q}_{t-1}[j''])}}>\lambda)<1 \Leftrightarrow \\
p(\hat{Q}_{t-1}[j]>\ln \frac{\lambda}{1-\lambda} + \sum_{j'' \in S^{2}/j}{\exp(\hat{Q}_{t-1}[j''])})<1 \Leftrightarrow \\
p(\hat{Q}_{t-2}[j]+(K_{t-1}y_{t-1})[j]>\ldots\\
\quad \ln \frac{\lambda}{1-\lambda} + \sum_{j'' \in S^{2}/j}{\exp(\hat{Q}_{t-1}[j''])})<1 
\end{array}
\label{eq:prop2in}
\end{equation}
we can expand $K_{t-1}y_{t-1}[j]$ as follows
\begin{equation}
\begin{array}{r l}
(K_{t-1}y_{t-1})[j]=&((P_{t-2}+(\tilde{\xi}+\epsilon)I)HS^{-1}y_{t-1})[j] \\
=&(P_{t-2}HS^{-1}y_{t-1})[j]+((\tilde{\xi}+\epsilon)HS^{-1}y_{t-1})[j] 
\end{array}
\label{eq:kexpand}
\end{equation}
If we substitute $K_{t-1}y_{t-1}[j]$ in (\ref{eq:prop2in}) with its equivalent we obtain the following inequality:

\begin{equation}
\begin{array}{l}
p(\hat{Q}_{t-2}[j]+(P_{t-2}HS^{-1}y_{t-1})[j]+((\tilde{\xi}+\epsilon)HS^{-1}y_{t-1})[j]>\ldots\\
\quad \ln \frac{\lambda}{1-\lambda} + \sum_{j'' \in S^{2}/j}{\exp(\hat{Q}_{t-1}[j''])})<1 \Leftrightarrow \\
p(((\tilde{\xi}+\epsilon)HS^{-1}y_{t-1})[j]> -\hat{Q}_{t-2}[j]-(P_{t-2}HS^{-1}y_{t-1})[j] \ldots \\
\quad \ln \frac{\lambda}{1-\lambda} + \sum_{j'' \in S^{2}/j}{\exp(\hat{Q}_{t-1}[j''])})<1 \Leftrightarrow \\
p(\epsilon>C)<1
\end{array}
\label{eq:prop2fin}
\end{equation}
where $C$ is defined as: 
 
\begin{equation}
\begin{array}{r l}
C=&\frac{\ln \frac{\lambda}{1-\lambda}}{(HS^{-1}y_{t-1})[j]}+\\ 
&\frac{\sum_{j'' \in S^{2}/j}{\exp(\hat{Q}_{t-1}[j''])}}{(HS^{-1}y_{t-1})[j]}-\\
&\frac{-\hat{Q}_{t-2}[j]-(P_{t-2}HS^{-1}y_{t-1})[j]}{(HS^{-1}y_{t-1})[j]}-\\
&\frac{q(HS^{-1}y_{t-1})[j]}{(HS^{-1}y_{t-1})[j]}
\end{array}
\label{eq:prop2C}
\end{equation}

Inequality (\ref{eq:prop2fin}) is always satisfied since $\epsilon$ is a Gaussian random variable. To conclude the proof we define $\chi_{t}$ as the event that both players change their action at time $t$ simultaneously, and assume that the two players have change their actions simultaneously at the following iterations $t_{1}, t_{2}, \ldots, t_{T}$, then the probability that they will also change their action simultaneously at time $t_{T+1}$, $P(\chi_{t_1},\chi_{t_2}, \ldots, \chi_{t_T}, \chi_{t_T+1})$ is almost zero for large but finite $T$.


\bibliographystyle{elsarticle-harv}
\bibliography{ekf,IEEEabrv,bibfile}







\end{document}